\newcommand{\Rmnum}[1]{\expandafter\@slowromancap\romannumeral #1@}
\newcommand{\mbbR}{\mbox{$\mathbb{R}$}}
\newcommand{\mbbP}{\mbox{$\mathbb{P}$}}
\date{}
\title{Stochastic Ordering under Conditional Modelling of Extreme
  Values: Drug-Induced Liver Injury}
\author[*]{Ioannis Papastathopoulos}
\author[$\dagger$]{Jonathan A. Tawn}
\affil[*]{\small School of Mathematics, University of Bristol}
\affil[$\dagger$]{\small Department of Mathematics and Statistics, Lancaster University}
\affil[$$]{\small i.papastathopoulos@bristol.ac.uk $\quad$ j.tawn@lancaster.ac.uk}
\newtheorem{theorem}{Theorem}
\newenvironment{proof}{\begin{trivlist} \item[]
    {\bf Proof.}}{\nolinebreak \hfill \rule{2mm}{2mm} \end{trivlist}}
\begin{document}
\maketitle
%\tableofcontents
%\listoffigures

\begin{abstract}
  Drug-induced liver injury (DILI) is a major public health issue and
  of serious concern for the pharmaceutical industry.\ Early detection
  of signs of a drug's potential for DILI is vital for pharmaceutical
  companies' evaluation of new drugs.\ A combination of extreme values
  of liver specific variables indicate potential DILI (Hy's Law).\ We
  estimate the probability of joint extreme elevations of laboratory
  variables using the conditional approach to multivariate extremes
  which concerns the distribution of a random vector given an extreme
  component.\ We extend the current model to include the assumption of
  stochastically ordered survival curves and construct a hypothesis
  test for ordered tail dependence between doses, a pattern that is
  potentially triggered by DILI. The model proposed is applied to
  safety data from a Phase 3 clinical trial of a drug that has been
  linked with liver toxicity.
\end{abstract}
%\keywords{Stochastic Ordering, Multivariate Extremes}
\textbf{Keywords:} conditional dependence; drug toxicity; liver
injury; multivariate extremes; safety data; stochastic ordering;

% -------------------------------------------------------------------------
\section{Introduction}
\label{sec:intro}
% -------------------------------------------------------------------------
% 
% -------------------------------------------------------------------------
Drug-induced liver injury (DILI) is a major public health and
industrial issue that has concerned clinicians for the past 50 years.\
\cite{FDAliver} reports that many drugs for a diverse range of
diseases were either removed from the market or rejected at the
pre-marketing stage because of severe DILI (e.g.,\ iproniazid,
ticrynafen, benoxaprofen, bromfenac, troglitazone, nefazodone, etc.).\
Therefore, signals of a drug's potential for DILI and early detection
can help to improve the evaluation of drugs and aid pharmaceutical
companies in their decision making.\ However, in most clinical trials
of hepatotoxic drugs, evidence of hepatotoxicity is very rare and
although the pattern of injury can vary, there are no pathognomonic
findings that make diagnosis of DILI certain, even upon liver biopsy.\
Indeed, most of the drugs withdrawn from the market for
hepatotoxicity, fall mainly in the post-marketing category, and have
caused death or transplantation at frequencies of less than 1 per
10000 people that have been administered the drug.

Although the mechanism that causes DILI is not fully understood yet,
the procedure under which its clinical assessment is performed stems
from Zimmerman's observation that hepatocellular injury sufficient to
impair bilirubin excretion is a revealing indicator of DILI (Zimmerman
1978, 1999), also informally known as Hy's Law.\ In other words, a
finding of \textit{alanine aminotransferase} (ALT) elevation, usually
substantial and greater than thrice the upper limit of the normal of
ALT (ULN$_{\text{ALT}}$), seen concurrently with \textit{bilirubin}
(TBL) greater than twice the upper limit of the normal of TBL
(ULN$_{\text{TBL}}$), identifies a drug likely to cause severe DILI
(fatal or requiring transplant).\ Moreover, these elevations should
not be attributed to any other cause of injury, such as other drugs,
and \textit{alkaline phosphatase} (ALP) should not be greatly elevated
so as to explain TBL's elevation.

\cite{southheff12b} identified the assessment of DILI as a
multivariate extreme value problem and using the \cite{hefftawn04}
modelling approach, analysed liver-related laboratory data.\ The use
of the \cite{hefftawn04} model in this context is supported by the
flexibility of the model to allow a broad class of dependence
structures and the possibility to describe the probabilistic behaviour
of a random vector which is extreme in at least one margin.\ Despite
its strong modelling potential, complications in terms of parameter
identifiability problems and invalid inferences are experienced with
the original modelling procedure of \cite{hefftawn04}.\
\cite{keefpaptawn11} provided missing constraints for the parameter
space of the \cite{hefftawn04} model that are aimed to overcome these
complications.

The data we consider in this study relates to observed liver-related
variables from a sample of 606 patients who were issued a drug that
has been linked to liver injury in a Phase 3 clinical trial and can be
found in \cite{texmex}; see also \cite{southheff12b}.\ The patients
were categorised into 4 different dose levels in a randomised,
parallel group, double blind Phase 3 clinical study.\ Our main
question in this paper about the data is whether they support evidence
of toxicity with increasing dose.\ This signal would be justified by a
significant positive probability of post-baseline ALT and TBL being
greater than $3 \times \text{ULN}_{\text{ALT}}$ and $2 \times
\text{ULN}_{\text{TBL}}$, respectively.\ However, insufficient trial
duration and the small sample sizes encountered in most such
applications may lead to estimated zero probabilities of DILI for all
doses.\ This would stem from the non-occurrence of joint ALT and TBL
elevations or from inaccurate extrapolation due to the limited source
of information.\ Therefore, other patterns that could indicate or be
triggered by DILI would be helpful and here we consider an alternative
approach for assessing evidence of altered liver behaviour.

The current understanding of the biology that underpins Hy's law, is
that liver cells leak ALT into the blood as they are damaged. As the
amount of damage increases, the amount of ALT increases, and so the
liver begins to lose its capacity to clear TBL. Subsequently, TBL is
also expected to start to increase. At levels of damage that do not
affect liver's ability to clear TBL, dependence is not
expected. Hence, given that the drug has increasing toxicity with
dose, we expect a natural ordering in the joint tail area of ALT and
TBL.\ This pattern of tail ordering is the main focus of this paper
and is used to aid inference as well as to improve estimation
efficiency in the modelling procedure of DILI.

\begin{table}[htbp!]
  \caption{Conditional Spearman's correlation estimates between ALT and TBL for four different dose levels and two conditioning levels, i.e.\, 100\% and 20\%. The letters $A$, $B$, $C$ and $D$ 
    represent, in increasing order, the amount of the dose.}    \vspace{5pt}
  \centering
  \begin{tabular}{r | r r | r r | r r}    
    \hline
    &\multicolumn{2}{c|}{Baseline} & \multicolumn{2}{c|}{Post-baseline} &
    \multicolumn{2}{c}{Residual} \\
    \hline
    & 100\% & 20\% & 100\% & 20\% & 100\% & 20\% \\
    \hline
    Dose A& 0.15 &  0.15 & 0.32 &  0.03  & 0.14  &  0.00 \\
    Dose B& 0.10 & -0.17 & 0.15 & -0.17  & -0.06 & -0.19\\
    Dose C& 0.12 & -0.11 & 0.11 & -0.05  & -0.08 & -0.14\\
    Dose D& 0.17 & -0.03 & 0.23 &  0.02  & 0.05  & -0.15
  \end{tabular}
  \label{table:kendall}
\end{table}

Our approach consists of developing, similar to
\cite{keefpaptawn11}, constraints that describe ordered dependence in
the joint tail. These new constraints can be used to test, via a
likelihood ratio test procedure, the effect of stochastic ordering in
the tails of ALT and TBL across dose and hence provide a signal of
altered liver behaviour. Subsequently, inference can be sharpened by
incorporating scientific knowledge in the modelling process through
the imposition of the ordering constraints. Our motivation to impose
ordering constraints in the tail stems from the fact that the
probability of DILI is logically ordered between different dose levels
when the drug is liver toxic and this feature is unlikely to be
evident from the data when considering clinical trials with small
sample sizes as this particular one. Therefore, estimation of ALT and
TBL under the dose ordering assumption is beneficial as it removes
variability that arises in small sample sizes in the joint tail region
of ALT and TBL.

For example, the two central columns of Table~\ref{table:kendall} show
the estimated conditional Spearman's rank correlation \citep{Schmid06}
of ALT and TBL measured at the baseline and post-baseline periods.\
The value $100\%$ corresponds to the usual Spearman's correlation
whereas the value $20\%$ corresponds to the rank correlation in the
upper $[0.8,1]\times [0.8,1]$ tail region of the copula space
\citep{genenesl12} of the bivariate random variable
$(\text{ALT},\text{TBL})$.\ Similarly, the last column shows the
conditional Spearman's correlation of the residuals of ALT and TBL
which are the log-measurements corrected for baseline differences, see
also Section~\ref{sec:Introapp}.\ Table~\ref{table:kendall} conveys
this lack of ordering since the estimated dependence of ALT and TBL at
dose A appears to be higher than any other dose in post-baseline and
residual scale.\ On the other hand, the published literature reports
jaundice, hepatitis and similar symptoms in approximately 1 out of 500
patients taking the dose D of this drug \citep{southheff12b}.\ We view
such high dependencies in low doses as a by-product of sampling
variability.

The proposed methodology and data analysis of the paper are based on
an asymptotically motivated model of multivariate extreme value
threshold model which is fitted to a fraction of the data.\ An
alternative approach would be to model the joint distribution between
the variables using all the data through the use of empirically
selected marginal and copula models \citep{joe97,Nelsen06,genenesl12}.\ The
former approach should exhibit less bias but larger variability than
the latter, thus as the sample size increases the extreme value
approach is likely to become more the efficient.\ The sample size in
our study is probably about at the boundary where the extreme value
methods have an advantage.\ Furthermore, even if the copula approach
were to be adopted, the strategies developed here would still be
relevant as the stochastic ordering issue would need addressing.

The paper is organised as follows.\ Section~\ref{sec:method} describes
the conditional dependence model of \cite{hefftawn04} and the
constraints of \cite{keefpaptawn11}.\ The additional constraints based
on the assumption of stochastically ordered conditional distributions
are presented, together with a likelihood ratio test of tail ordering,
in Section~\ref{sec:stochorder}.\ The effect of the constraints of
\cite{keefpaptawn11} and this paper are assessed with a simulation
study in Section~\ref{sec:simulation}.\ In
Section~\ref{sec:application} we apply the additional constraints in
an analysis of multivariate extremes of ALT and TBL of the DILI data.

%-------------------------------------------------------------------------
\section{Methodology}
\label{sec:method}
%-------------------------------------------------------------------------

%-------------------------------------------------------------------------
\subsection{Marginal transformation}
\label{sec:Laptrans}
%-------------------------------------------------------------------------

Here and throughout vector algebra is applied componentwise.\ Let
$\boldsymbol{X}=(X_{1},...,X_{d})$ be a continuous $d$-dimensional
random vector and $\Delta=\{1,...,d\}$.\ We adopt the
marginal transformation to approximate Laplace margins
\citep{keefpaptawn11}
\begin{IEEEeqnarray}{rCl}
  Y_{i}=\left\{ \begin{array}{cl}
      \log\{2\hat{F}_{X_i}\left(X_{i}\right)\} &\text{for
        $\hat{F}_{X_i}(X_{i})<\frac{1}{2}$,}   \\
      -\log\{2[1-\hat{F}_{X_i}\left(X_i\right)]\} &\text{for
        $\hat{F}_{X_i}(X_{i})\geq \frac{1}{2}$,}
    \end{array}\right.
  \label{eq:laplace}
\end{IEEEeqnarray}
where the estimated distribution function $\hat{F}_{X_{i}}$ is
obtained from the semi-parametric model of \cite{coletawn94}
\begin{IEEEeqnarray}{rCl}
  \hat{F}_{X_{i}}(x) = 
  \left\{
    \begin{array}{ll}
      1-\{1-\tilde{F}_{X_{i}}(u_{X_{i}})\}\{1+\hat{\xi}_{i}(x-u_{X_{i}})/\hat{\sigma}_{i}\}_
      {+}^
      {-1/\hat{\xi}_{i}}
      & \mbox{for $x>u_{X_{i}}$,}      
      \\
      \tilde{F}_{X_{i}}(x) & \mbox{for $x\leq u_{X_{i}}$.}
    \end{array}
    \right.      
    \label{eq:coletawn}
\end{IEEEeqnarray}
Here, $\tilde{F}_{X_{i}}$ is the empirical distribution function and
$u_{X_{i}}$ is a threshold above which the generalised Pareto
distribution with scale parameter and shape parameters $\sigma_i$ and
$\xi_i$, short-hand GP($\sigma_{i},\xi_{i}$), $\sigma_{i}\in
(0,\infty)$, $\xi_{i}\in \mbbR$, is fitted to the observed values of
the excess random variable $X_{i}-u_{X_{i}}|X_{i}>u_{X_{i}}$
\citep{davismit90}.\ The choice of the transformation to Laplace
marginals is motivated by the symmetry of the Laplace distribution
that ensures the model is unchanged for negatively dependent variables
\citep{keefpaptawn11}.

% \cite{nadaande98} give the exact conditions under which two
% generalised extreme value distributions are stochastically
% ordered. Similarly, for GPD($\sigma_{i},\xi_{i}$) distributions
% $F_{X_{1}}(x),\hdots, F_{X_{d}}(x)$, the ordering constraint
% $F_{X_{1}}(x) \leq \hdots \leq F_{X_{d}}(x)$, is satisfied in the
% simplest case where the shape parameters are identical,
% i.e. $\xi_{1}= \hdots =\xi_{d}$, when the following constraint is
% true
% \begin{equation} \sigma_{1} \geq \hdots \geq \sigma_{d}.
% \end{equation}

%\newpage
%-------------------------------------------------------------------------
\subsection{Conditional modelling of extreme values}
\label{sec:condmodel}
%-------------------------------------------------------------------------

The \cite{hefftawn04} conditional dependence model characterises the
probabilistic behaviour of the conditional random vector
$\boldsymbol{Y}_{-i}|Y_{i}=y$, for large $y$.\ The random vector
$\boldsymbol{Y}_{-i}$ denotes the $(d-1)$-dimensional vector of the
transformed variables without the $i$-th margin.\ According to
\cite{hefftawn04}, for each $i \in \Delta$, there exist vector-valued
normalising functions, $\boldsymbol{a}_{|i}: \mathbb{R} \rightarrow
\mathbb{R}^{d-1}$ and $\boldsymbol{b}_{|i}: \mathbb{R} \rightarrow
\mathbb{R}^{d-1}$, such that for $x>0$
\begin{IEEEeqnarray}{rCl}
  \mbbP\left\{Y_{i}-u>x,\frac{\boldsymbol{Y}_{-i}-\boldsymbol{a}_{|i}
      \left(Y_{i}\right)}
    {\boldsymbol{b}_{|i}\left(Y_{i}\right)}\leq\boldsymbol{z}\bigg |
    Y_{i}>u \right\} & \rightarrow &
  \exp\left(-x\right)G_{|i}\left(\boldsymbol{z}\right) \quad\text{as
    $u \rightarrow \infty$},
  \label{eq:G}
\end{IEEEeqnarray}
where the $j$th marginal distribution $G_{j|i}$ of $G_{|i}$ is a
non-degenerate distribution function for all $j \in
\Delta\setminus\{i\}$ and additionally, the following condition is
required such that $G_{|i}$ is uniquely-defined
\begin{IEEEeqnarray}{rCl}
  \lim_{z\rightarrow \infty}G_{j|i}\left(z\right) & = & 1 \quad
  \text{for all $j \neq i$}, \nonumber
\end{IEEEeqnarray}
so there is no mass at $+\infty$ but some is allowed at $-\infty$, in
any margin.\ \cite{hefftawn04} identified that the normalising
functions are unique up to type, and for a broad class of
distributions, \cite{keefpaptawn11} showed that these functions are
all in the parametric family
\begin{IEEEeqnarray}{rClCl}
  \boldsymbol{a}_{|i}\left(x\right) &=& \boldsymbol{\alpha}_{|i}x
  \quad \text{and} \quad \boldsymbol{b}_{|i}\left(x\right) &=&
  x^{\boldsymbol{\beta}_{|i}},\nonumber
\end{IEEEeqnarray}
with $(\boldsymbol{\alpha}_{|i},\boldsymbol{\beta}_{|i}) \in
\left[-1,1\right]^{d-1}\times\left(-\infty,1\right)^{d-1}$ and $x>0$.\
Positive and negative dependence between variables $Y_{i},Y_{j}$, for
$i \neq j$ is given by $\alpha_{j|i}>0$ and $\alpha_{j|i}<0$,
respectively, with $\alpha_{j|i}$ the associated
$\boldsymbol{\alpha}_{|i}$ with $Y_{j}$ variable.\ The strongest form
of positive (negative) extremal dependence occurs when
$\alpha_{j|i}=1$ ($\alpha_{j|i}=-1$) and $\beta_{j|i}=0$ and is termed
as asymptotic positive (negative) dependence, for all $j \neq i$.\
Otherwise, variables are termed asymptotically independent.\ 

The conditional model of \cite{hefftawn04} can be viewed as a
multivariate semiparametric regression of $\boldsymbol{Y}_{-i}$ on $
Y_{i}$, i.e.\, given $Y_{i}>u$, for large $u$
\begin{IEEEeqnarray}{rCl}
  \boldsymbol{Y}_{-i} & = & \boldsymbol{\alpha}_{|i}x +
  x^{\boldsymbol{\beta}_{|i}}\boldsymbol{Z}_{|i} \quad \text{for
    $Y_{i}=x>u$}
\end{IEEEeqnarray}
where $\boldsymbol{Z}_{|i}$ is a $d-1$ dimensional variable with
non-zero mean and distribution function $G_{|i}$.\ The original
procedure of \cite{hefftawn04} for estimating the vector parameters
$\boldsymbol{\alpha}_{|i}$ and $\boldsymbol{\beta}_{|i}$ consists of
using pseudo-likelihood methods to jointly estimate the parameters of
interest.\ In particular, if $\boldsymbol{Z}_{|i}$ has finite vector
mean $\boldsymbol{\mu}_{|i}$ and standard deviations
$\boldsymbol{\sigma}_{|i}$, then the mean and standard deviation of
the conditional random variable $\boldsymbol{Y}_{-i}|Y_{i}>u$ is
$\boldsymbol{\alpha}_{|i}Y_{i}+
\boldsymbol{\mu}_{|i}\left(Y_{i}\right)^{\boldsymbol{\beta}_{|i}}$ and
$Y_{i}^{\boldsymbol{\beta}_{|i}}\boldsymbol{\sigma}_{|i}$,
respectively.\ Under the false working assumption that
$\boldsymbol{Z}_{|i}$ are independent Normal random variables,
numerical maximisation of the likelihood over the parameter space is
required to obtain parameter estimates
$(\hat{\boldsymbol{\alpha}}_{|i},
\hat{\boldsymbol{\beta}}_{|i},\hat{\boldsymbol{\mu}}_{|i},
\hat{\boldsymbol{\sigma}}_{|i})$, and $G_{|i}$ is estimated
nonparametrically by the empirical distribution function of:
\begin{equation}
  \hat{\boldsymbol{Z}}_{|i} = \frac{\boldsymbol{Y}_{-i} -
    \hat{\boldsymbol{\alpha}}_{|i}Y_{i}  }
  {(Y_{i})^{\hat{\boldsymbol{\beta}}_{|i}}}. 
\end{equation}

Given parameter estimates, standard procedures for inference and
extrapolation can be performed as in \cite{hefftawn04} by implementing
Algorithm \ref{alg:1}.\ As an example, the functional
$\mbbP\left(\boldsymbol{X} \in C|X_{i}>s\right)$ can be approximated
by repeating steps 1--5, and evaluating the estimate as the long run
proportion of the generated sample that falls in a set $C \in
\mathbb{R}^{d}$.\ As far as the confidence intervals of the estimate
of any functional are concerned, these are obtained by the replication
of the three stages of the following bootstrap method: data generation
under the fitted model, estimation of model parameters and the
derivation of an estimate of any derived parameters linked to
extrapolation.
\begin{algorithm}
  \caption{Sampling Algorithm}
  \label{alg:1}
  \begin{algorithmic}[1] \State Simulate $Y_{i}$ from the Laplace
    distribution conditional on its exceeding threshold $u>0$.
    
    \vspace{5pt}
    
\State Sample $\boldsymbol{Z}_{|i}$, independently of $Y_{i}$, from
the empirical distribution function, $\hat{G}_{|i}$.
    
    \vspace{5pt}
    
    \State Obtain $\boldsymbol{Y}_{-i}=\boldsymbol{\alpha}_{|i}Y_{i} +
    (Y_{i})^{\boldsymbol{\beta}_{|i}} \boldsymbol{Z}_{|i}$.
    
    \vspace{5pt}
    
    \State Transform $\boldsymbol{Y}=(\boldsymbol{Y}_{-i},Y_{i})$ to the
    original scale by using the inverse transformation of equation
    (\ref{eq:laplace}) for each margin.
  
    \vspace{5pt}
  
    \State The resulting transformed vector $\boldsymbol{X}$ constitutes
    a simulated value from the conditional distribution of
    $\boldsymbol{X}|X_{i}>t^{-1}(u)$, where $t^{-1}(\cdot)$ denotes the
    inverse transformation of equation (\ref{eq:laplace}).
  \end{algorithmic}
\end{algorithm}

% -------------------------------------------------------------------------
\subsection{Inference based on \cite{keefpaptawn11}}
\label{sec:inference}
%-------------------------------------------------------------------------
Although the efficiency of the model has led to its implementation in
a wide range of applications including riverflow and rainfall
\citep{keefST09}, temporal river flow cases \citep{eastoetawn12}, food
safety \citep{pauletal06} and finance \citep{abbas11}, it was recently
discovered by \cite{keefpaptawn11} that further constraints on the
parameter space of the model are required.\ According to Sibuya (1960)
and Tiago de Oliveira (1962/63), there are different categorisations
of extremal dependence between two random variables $(X_{i},X_{j})$,
i.e.\, asymptotic dependence and asymptotic independence measured by
the coefficients of tail dependence
\begin{IEEEeqnarray}{rCl}
  \chi_{ij}^{+} & = & \lim_{p\rightarrow 1}\mbbP\left\{X_{j} >
    F_{j}^{-1}(p)|X_{i} > F_{i}^{-1}(p)\right\},\nonumber\\\label{eq:chi}\\
  \chi_{ij}^{-} & = & \lim_{p\rightarrow 1}
  \mbbP\left\{X_{j} <
    F_{j}^{-1}(1-p)|X_{i} > F_{i}^{-1}(p)\right\}.\nonumber
\end{IEEEeqnarray}
When $\chi_{ij}^{+}>0$ ($\chi_{ij}^{-}>0$) the variables are termed
asymptotically positive (negative) dependent and asymptotically
independent, otherwise.\ Taking these measures into consideration,
\cite{hefftawn04} omitted the fact that there is stochastic ordering
between asymptotically independent and dependent models. % since
% $\chi_{ij}^{+}>0$ and $ \chi_{ij}^{-} >0$ holds for any type of
% extremal dependence.  , where $\chi_{ij}^{+}, \chi_{ij}^{-},
% \chi_{ij}$ denote the coefficients of tail dependence of
% asymptotically positive dependent, asymptotically negative dependent
% and asymptotically independent variables, respectively.\
In particular, let the $q$th conditional quantile of $Y_{j}|Y_{i}=x$,
for large $x$ under the \cite{hefftawn04} model be
$y_{j|i}(q)=\alpha_{j|i}x+x^{\beta_{j|i}}z_{j|i}(q)$, the associated
quantile under asymptotic positive dependence
$y_{j|i}^{+}(q)=x+z_{j|i}^{+}(q)$ and the associated quantile under
asymptotic negative dependence $y_{j|i}^{-}(q)=-x+z_{j|i}^{-}(q)$.\
The natural restriction
\begin{equation}
  y_{j|i}^{-}(q) \leq y_{j|i}(q) \leq y_{j|i}^{+}(q), \quad \text{for all } q\in[0,1],
  \label{eq:keef_constr}
\end{equation}
imposes further constraints on the parameter space of the model which
are given by (Theorem 1.1,~\cite{keefpaptawn11}), i.e.\, for all
$q\in [0,1]$:\newline

\noindent Case I:  either
\[
\alpha_{j|i} \leq \min\left\{1, 1 - \beta_{j|i} z_{j|i}(q)
  v^{\beta_{j|i} - 1}, 1 - v^{\beta_{j|i} - 1} z_{j|i}(q) + v^{-1}
  z^+_{j|i}(q)\right\}.
\]
\noindent
or
\[
1 - \beta_{j|i} z_{j|i}(q) v^{\beta_{j|i} - 1} < \alpha_{j|i} \leq 1\quad
\mbox{ and } \quad
(1-\beta_{j|i}^{-1})\{\beta_{j|i}z_{j|i}(q)\}^{1/(1-\beta_{j|i})}
(1-\alpha_{j|i})^{-\beta_{j|i}/(1-\beta_{j|i})}+ z^+_{j|i}(q)>0.
\]
\noindent Case II: either
\[
-\alpha_{j|i} \leq \min\left\{ 1,
  1+\beta_{j|i}v^{\beta_{j|i}-1}z_{j|i}(q),
  1+v^{\beta_{j|i}-1}z_{j|i}(q) - v^{-1} z^{-}_{j|i}(q)\right\}
\]
\noindent
or
\[
1 + \beta_{j|i}v^{\beta_{j|i}-1}z_{j|i}(q)<-\alpha_{j|i} \le 1 \quad\mbox{
  and } \quad
(1-\beta_{j|i}^{-1})(-\beta_{j|i}z_{j|i}(q))^{1/(1-\beta_{j|i})}
(1+\alpha_{j|i})^{-\beta_{j|i}/(1-\beta_{j|i})} - z^{-}_{j|i}(q)>0.
\]
where $v>u$ is a value above the maximum observed value of $Y_{i}$ so
that the constraints are imposed only on extrapolations.\ As far as
the selection of $q$ is concerned, \cite{keefpaptawn11} found
empirically that for both cases  conditions were satisfied for all
$q$ if they were each satisfied for both $q=0$ and $q=1$.

% and $q$ such that $z_{j|i}(q)$ corresponds to the largest observed
% value.

%-------------------------------------------------------------------------
\section{Estimation of \cite{hefftawn04} model under stochastic ordering}
\label{sec:stochorder}
%-------------------------------------------------------------------------

%-------------------------------------------------------------------------
\subsection{Quantile Ordering Constraints}
\label{sec:quantorderconstr}
%-------------------------------------------------------------------------

In this paper we exploit the same idea for the construction of the
parameter space of the \cite{hefftawn04} model under the assumption of
stochastic ordering between conditional random variables.\
Specifically, let the $q$th, $q\in[0,1]$, conditional quantile of
$Y_{l}|Y_{j}=x$ and $Y_{k}|Y_{i}=x$, for large $x$, be $y_{l|j}(q)$
and $y_{k|i}(q)$, respectively.\ Under the \cite{hefftawn04} model we
have that $y_{l|j}(q)=\alpha_{l|j}x+x^{\beta_{l|j}}z_{l|j}(q)$ and
$y_{k|i}(q)=\alpha_{k|i}x+x^{\beta_{k|i}}z_{k|i}(q)$.\ Our objective
is to derive constraints under which there is stochastic ordering
between the conditional variables so that the following condition is
always satisfied for all $x$ above a level $v > u$
\begin{equation}
  \label{eq:inequality}
  y_{k|i}(q)\leq y_{l|j}(q),\quad \text{for all } q \in [0,1].
\end{equation}
%---------------------------------------------------------------------
%Theorem for Stochastic Ordering restrictions.
%---------------------------------------------------------------------
The motivation for exploring inequality~(\ref{eq:inequality}) stems
from the dose ordering effect in the joint region of ALT and
TBL. Consider for example, the transformed ALT and TBL with respect to
equation~(\ref{eq:laplace}), and let $y_{2|1}^{A}(q)$ and
$y_{2|1}^{B}(q)$ be the conditional quantiles of TBL given a large
level of ALT for dose $A$ and $B$, respectively. Then under the
assumption of liver toxicity, it is intuitive to consider the natural
ordering of the conditional quantiles $y_{2|1}^{A}(q)\leq
y_{2|1}^{B}(q)$. The following theorem gives conditions under which
two conditional quantiles based on the \cite{hefftawn04} satisfy the
ordering constraint (\ref{eq:inequality}), for a $q \in [0,1]$.
\begin{table}[htbp!]
  \caption{Exclusive conditions for the stationary points of $D(x)$, 
    where $\alpha_{l|j}\geq\alpha_{k|i}$ and
    $s = \left[\left\{\beta_{k|i}(\beta_{k|i}-1)z_{k|i}(q)\right\}/
      \left\{\beta_{l|j}(\beta_{l|j}-1)z_{l|j}(q)\right\}\right]^
    {1/(\beta_{l|j}-\beta_{k|i})}$.}
  \centering
  \begin{tabular}{ c | c | c | c | c }
    \hline
    &number of s.p. & $D'(v)$ & $s$ &$D'(s),~ s\in \mathbb{R}$  \\
    \hline        
    & 0 & $>0$  & complex/real & $(0,\infty)$  \\
    & 1 & $<0$  & complex/real & $(-\infty,\infty)$  \\
    & 2 & $>0$  & real         & $(-\infty,0)$      
    \label{table:sp}
  \end{tabular}
\end{table} 
\begin{theorem}
  \label{theorem:1}
  Let $D(x):\left[v,\infty\right)\rightarrow \mathbb{R}$, $v>0$, such
  that $D(x)=(\alpha_{l|j}-\alpha_{k|i})x +
  x^{\beta_{l|j}}z_{l|j}(q)-x^{\beta_{k|i}}z_{k|i}(q)$, with
  $(\alpha_{l|j},\alpha_{k|i})\in [-1,1]^{2}$,
  $(\beta_{l|j},\beta_{k|i})\in (-\infty,1)^2$, and
  $(z_{l|j},z_{k|i})\in\mathbb{R}^2$. For $v \geq u$ and for all $q
  \in [0,1]$, the ordering constraint~(\ref{eq:inequality}) holds for
  all $x>v$ if $\alpha_{l|j}\geq\alpha_{k|i}$ and for all $q\in [0,1]$,
  either
  \begin{enumerate}
  \item $D(x)$ has no stationary point and $D(v) \geq 0$, or
  \item $D(x)$ has one stationary point $x_{*}>v$ and
    $\min\left\{D(v),D(x_{*})\right\} \geq 0$, or
  \item $\min\left\{D(v),D(x_{*}),D(x_{**})\right\} \geq 0$, where
    $x_{*}$ and $x_{**}$ are the two stationary points of $D(x)$, with
    $\min(x_{*},x_{**})>v$.
  \end{enumerate}

  % \noindent i) the number of the stationary points (s.p.) of $D(x)$,
  % subject to, can be categorised
  % according to the exclusive conditions set out by the columns 2-4 of
  % Table \ref{table:sp},
% \begin{enumerate}
% \item when $D(x)$ has no s.p.,
%   \[
%   \alpha_{k|i} \leq \min\Big\{\alpha_{l|j},\alpha_{l|j}+
%   v^{(\beta_{l|j}-1)}z_{l|j}(q)- v^{(\beta_{k|i}-1)}z_{k|i}(q)\Big\},
%   \]
% \item when $D(x)$ has one s.p.,
%   \begin{IEEEeqnarray*}{rCl}
%     \alpha_{k|i} &\leq& \min\Big\{\alpha_{l|j},\alpha_{l|j}+
%     v^{(\beta_{l|j}-1)}z_{l|j}(q)- v^{(\beta_{k|i}-1)}z_{k|i}(q),\\\\
%     &&\alpha_{l|j}+ x_{*}^{(\beta_{l|j}-1)}z_{l|j}(q)-
%     x_{*}^{(\beta_{k|i}-1)}z_{k|i}(q)\Big\},
%   \end{IEEEeqnarray*}
% \item otherwise,
%   \begin{IEEEeqnarray*}{rCl}
%     \alpha_{k|i} &\leq& \min\Big\{\alpha_{l|j},
%     \alpha_{l|j}+v^{(\beta_{l|j}-1)}z_{l|j}(q)-v^
%     {(\beta_{k|i}-1)}z_{k|i}(q),\\\\
%     &&\alpha_{l|j}+x_{*}^{(\beta_{l|j}-1)}z_{l|j}(q)-x_{*}^
%     {(\beta_{k|i}-1)}z_{k|i}(q),\\\\
%     &&\alpha_{l|j}+x_{**}^{(\beta_{l|j}-1)}z_{l|j}(q)-x_{**}^
%     {(\beta_{k|i}-1)}z_{k|i}(q)\Big\}.
%   \end{IEEEeqnarray*}
% \end{enumerate}
\end{theorem}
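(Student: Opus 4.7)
The condition $y_{k|i}(q) \le y_{l|j}(q)$ for all $x \ge v$ is equivalent to $D(x) \ge 0$ on $[v,\infty)$, so my plan is to use elementary calculus on $D$ to enumerate the candidate minimisers and then simply require $D$ to be non-negative at each one. Because $\beta_{l|j}, \beta_{k|i} < 1$, the two power terms in $D(x)$ are $o(x)$, so the hypothesis $\alpha_{l|j} \ge \alpha_{k|i}$ ensures that $D(x) \to +\infty$ whenever $\alpha_{l|j} > \alpha_{k|i}$; the equality subcase is the degenerate one that is implicitly ruled out by the table's sign constraints on $D'(v)$. Under these hypotheses the infimum of $D$ on $[v,\infty)$ is attained either at the endpoint $v$ or at an interior stationary point, which reduces the problem to locating and evaluating the stationary points of $D$.

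I would then differentiate:
\[
D'(x) = (\alpha_{l|j}-\alpha_{k|i}) + \beta_{l|j} z_{l|j}(q) x^{\beta_{l|j}-1} - \beta_{k|i} z_{k|i}(q) x^{\beta_{k|i}-1},
\]
\[
D''(x) = \beta_{l|j}(\beta_{l|j}-1) z_{l|j}(q) x^{\beta_{l|j}-2} - \beta_{k|i}(\beta_{k|i}-1) z_{k|i}(q) x^{\beta_{k|i}-2}.
\]
Setting $D''(x)=0$ rearranges to $x^{\beta_{l|j}-\beta_{k|i}} = \beta_{k|i}(\beta_{k|i}-1) z_{k|i}(q) / \{\beta_{l|j}(\beta_{l|j}-1) z_{l|j}(q)\}$, whose unique positive solution, when the right hand side is positive, is precisely the $s$ defined in the statement. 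Hence $D'$ is strictly monotone on each side of $s$ on $(0,\infty)$ and so has at most two positive zeros, giving at most two stationary points of $D$ on $(v,\infty)$ and thereby the three-fold case split.

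The remaining step is a routine case analysis. With no stationary point, the premise $D'(v) > 0$ from the table forces $D' > 0$ throughout $[v,\infty)$, so $D$ is monotone increasing and $D(v) \ge 0$ implies $D \ge 0$; with one stationary point $x_* > v$ and $D'(v) < 0$, the function $D$ decreases from $v$ to $x_*$ and then increases, so the infimum is $\min\{D(v), D(x_*)\}$; with two stationary points $v < x_* < x_{**}$ and $D'(v) > 0$, $D$ alternately increases, decreases and increases, giving a local maximum at $x_*$ and a local minimum at $x_{**}$, so the infimum is $\min\{D(v), D(x_{**})\}$, which is no smaller than $\min\{D(v), D(x_*), D(x_{**})\}$. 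The obstacle I anticipate is the sign bookkeeping needed to show that the three rows of the table are exhaustive under the hypothesis $\alpha_{l|j} \ge \alpha_{k|i}$: one must reconcile whether $s$ is real, whether $s$ exceeds $v$, and the sign of $D'(v)$, so that the number of stationary points listed in a given row is consistent with its other entries and so that the asymptotic divergence of $D$ is not ruined by the degenerate equality case in $\alpha$.
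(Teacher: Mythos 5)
Your proposal is correct and follows essentially the same route as the paper: bound the number of stationary points of $D$ by observing that $D''$ has the single root $s$ (the paper additionally cites Descartes' rule of signs for generalised polynomials, but your Rolle-type argument via $D''$ is the same mechanism the paper uses to justify Table~\ref{table:sp}), then split into the cases of zero, one or two stationary points and require non-negativity of $D$ at $v$ and at each stationary point. The sign-bookkeeping issue you flag --- reconciling the reality of $s$, its position relative to $v$, the sign of $D'(v)$, and the degenerate case $\alpha_{l|j}=\alpha_{k|i}$ --- is likewise left implicit in the paper's own proof, which simply asserts that the table's conditions follow from the uniqueness of the root of $D''$.
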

%---------------------------------------------------------------------
%Proof of theorem
%---------------------------------------------------------------------
\begin{proof}
  According to Descarte's rule of signs and its extension to
  generalised polynomials \citep{jame06}, $D'(x)=0$ can have at most
  two solutions.\ Therefore $D(x)$ can have at most two stationary
  points.\ Numerical inspection of the function (e.g.\ for
  $\alpha_{l|j}=0.2,\alpha_{k|i}=0.1,\beta_{l|j}=0.2,
  \beta_{k|i}=0.5$, $z_{l|j}(q)=0.6$ and $z_{k|i}(q)=0.6$) shows that
  there can be cases where $D(x)$ has two stationary points.\ The
  cases of Table~\ref{table:sp} follow from noting that $D''(s)=0$ is
  the unique root of $D''(x)$, so that $D'(x)$ has at most one
  stationary point, i.e.\, when $s>v \in \mathbb{R}$ then $s$ is a
  s.p.\ of $D'(x)$, otherwise $s$ is a complex number so that $D'(x)$
  is monotone for $x>v$.  \newline

\noindent The condition $D(x)\geq 0$, for all
$x\in\left[v,\infty\right)$, implies that $\lim_{x \rightarrow
  \infty}D(x)$ can be either 0 or $\infty$. Hence,
  \[
    \alpha_{l|j}\geq\alpha_{k|i}.
  \]
  Categorising the cases with respect to the number of stationary
  points of $D(x)$, we have that $D(x)\geq 0$, for all $x>v$, if and
  only if one of the 3 conditions of Theorem~\ref{theorem:1} (ii)
  holds.\
\end{proof}
From a computational perspective, the constraints follow from the
nature of the $D(x)$ function, i.e.\, one needs to find the stationary
points of $D(x)$ so that estimation of parameters in the
\cite{hefftawn04} model under the quantile ordering assumption can be
carried out.\ The conditions in columns 2-4 of Table~\ref{table:sp}
are necessary and sufficient for the number of stationary points
specified in column 1 and can be checked numerically.\ Also, the
function $D'(x)$ is not linear so closed form roots of $D'(x)=0$ do
not exist. If $D(x)$ has one stationary point then one dimensional
root finding is sufficient to estimate the root of $D'(x)$.\ If $D(x)$
has two stationary points the domain of the function $D(x)$ can be
separated into two subintervals $(v,s)$ and $(s,\infty)$, and in each
interval, one dimensional root finding is sufficient to yield
estimates of these two stationary points.

%-------------------------------------------------------------------------
\subsection{Inference based on stochastic ordering assumptions}
\label{sec:inferencestoch}
%-------------------------------------------------------------------------
Regarding estimation of the \cite{hefftawn04} model under stochastic
ordering, Theorem \ref{theorem:1} provides a set of exclusive cases
where each one shows the number of stationary points that the function
$D(x)$ can have.\ This provides an automatic way for selecting the
associated stochastic ordering condition that is used, jointly with
the constraints~(\ref{eq:keef_constr}) of \cite{keefpaptawn11} for
asymptotic dependence, to constrain the likelihood of the model.\ To
constrain more than two conditional survival curves, e.g., $Y_l\mid
Y_j=x$, $Y_k\mid Y_i=x$$,Y_m\mid Y_p=x $, maximisation of the
likelihood is performed subject to $(\alpha_{l\mid j},\beta_{l\mid
  j},\alpha_{k\mid i},\beta_{k\mid i},\alpha_{m\mid p},\beta_{m\mid
  p})\in R_{lj,ki} \cap R_{ki,mp}$. Here, the set $R_{lj,ki}$, for
example, denotes the parameter space of $(\alpha_{l\mid
  j},\beta_{l\mid j},\alpha_{k\mid i},\beta_{k\mid i})$ subject to
$y_{l|j}^{-}(q) \leq y_{l|j}(q) \leq y_{l|j}^{+}(q)$, $y_{k|i}^{-}(q)
\leq y_{k|i}(q) \leq y_{k|i}^{+}(q)$ and $y_{l \mid j}(q)\geq y_{k
  \mid i}(q)$, for all $x>v$.
\begin{figure}[htbp!]
  \centering
  \includegraphics[scale=1.2,trim= 20 15 20 20]{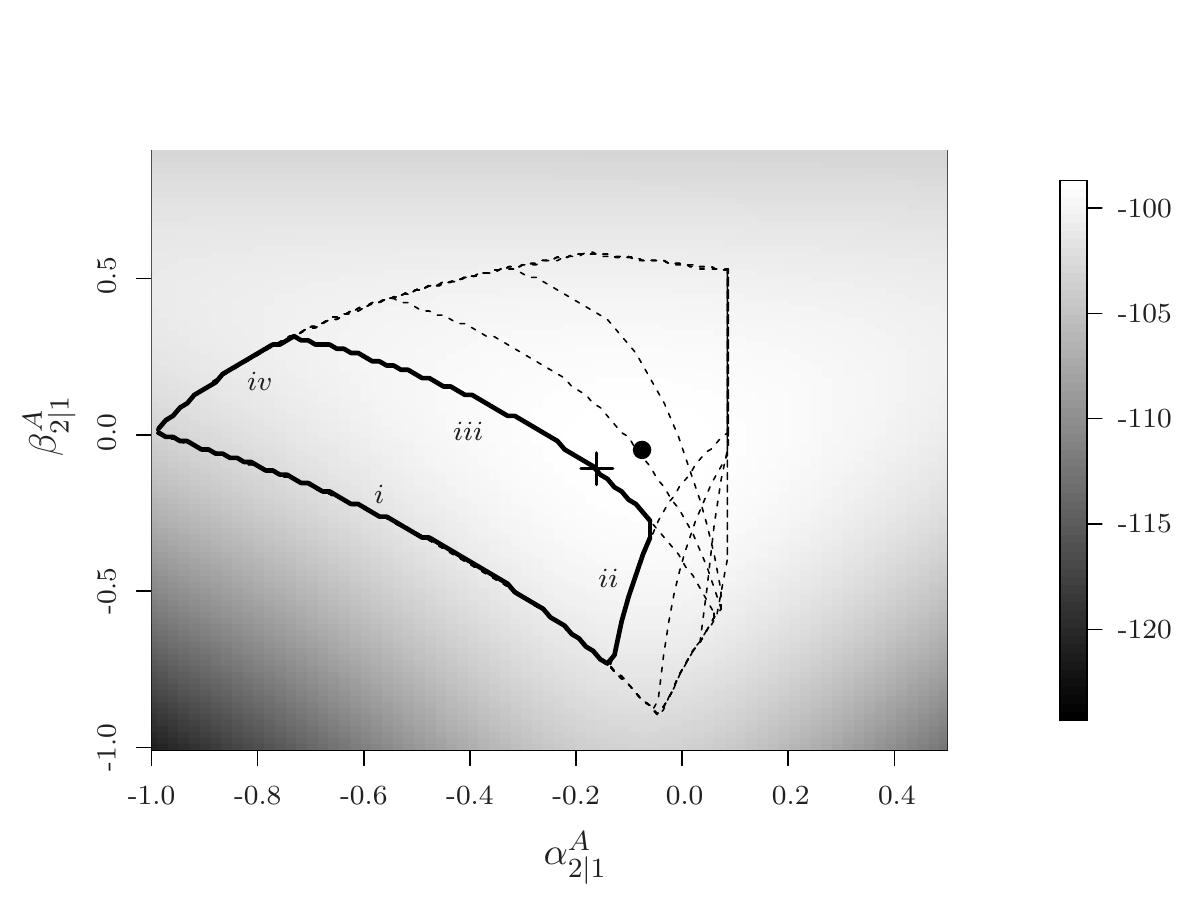}
  % \pstool[scale=0.6,angle=270]{SOplot.eps}{
  %   \psfrag{AA}{\scriptsize $i$}
  %   \psfrag{BB}{\scriptsize $ii$}
  %   \psfrag{CC}{\scriptsize $iii$}
  %   \psfrag{DD}{\scriptsize $iv$}
  %   \psfrag{a}{$\alpha_{2|1}^{A}$} \psfrag{b}{$\beta_{2|1}^{A}$}
  %   \psfrag{-0.2}{\scriptsize -$0.2$} \psfrag{-1.0}{\scriptsize
  %     -$1.0$} \psfrag{-0.8}{\scriptsize -$0.8$}
  %   \psfrag{-0.6}{\scriptsize -$0.6$} \psfrag{-0.4}{\scriptsize
  %     -$0.4$} \psfrag{0.0}{\scriptsize$0.0$}
  %   \psfrag{0.2}{\scriptsize$0.2$}
  %   \psfrag{0.4}{\scriptsize$0.4$}\psfrag{-0.5}{\scriptsize -$0.5$}
  %   \psfrag{0.5}{\scriptsize$0.5$} \psfrag{-100}{\scriptsize -$100$}
  %   \psfrag{-105}{\scriptsize -$105$} \psfrag{-110}{\scriptsize
  %     -$110$}\psfrag{-115}{\scriptsize -$115$} \psfrag{-120}{\scriptsize
  %     -$120$}}
  \caption{Profile log-likelihood surface for dose $A$ parameters
    $(\alpha_{2|1}^{A},\beta_{2|1}^{A})$.\ The solid curves show the
    boundary of the parameter space under the constraints of
    Theorem~\ref{theorem:1} when $q=0$ and $q=1$.\ Dashed curves show
    the constraints of Theorem 1 when $0 < q < 1$, showing these
    constraints are less restrictive than when $q = 0$ and $q = 1$.\
    The dot and cross show estimated parameters for unconstrained and
    constrained estimation respectively. The labels $i$, $ii$, $iii$
    and $iv$ are explained in Section~\ref{sec:inferencestoch}.}
  \label{fig:SO}
\end{figure}
For the required stochastic ordering (\ref{eq:inequality}) constraint
we numerically found that the conditions of Theorem~\ref{theorem:1}
were satisfied for all $q$ if they were satisfied for both $q = 0$ and
$1$.\ To illustrate this feature, Figure \ref{fig:SO} shows the
profile log-likelihood surface of the conditional dependence model
parameters of TBL given ALT for dose $A$, denoted by
$\alpha_{2|1}^{A}$ and $\beta_{2|1}^{A}$, under the assumption that
the conditional quantile of dose $A$ is smaller than the conditional
quantile of dose $B$.\ The solid lines correspond to the joint $q=0$
and $q=1$ constraints whereas the dashed lines correspond to $0<q<1$
constraints.\ In particular, the shape of the constraints is
quasi-trapezoid with sides highlighted on the figure for the joint
$q=0$ and $q=1$ case by $i,ii,iii$ and $iv$.\ Sides $i$ and $ii$ are
affected by quantiles near 0, with dashed lines showing, in the bottom
right area of the figure and from left to right, the constraints
induced by the $0,0.02,0.05$ and $0.1$ quantiles.\ All side $ii$
constraints join at a point and extend to become the upper
boundary. Sides $iii$ and $iv$ are mostly affected by quantiles near
$1$, with dashed lines showing, from bottom to top, the constraints
induced by the $1, 0.9$ and $0.8$ quantiles.\ All side $iii$
constraints join at a point and extend to become the lower
boundary. The constraints induced by small quantiles affect different
areas of the parameter space than those induced by larger quantiles
and, as illustrated by Figure \ref{fig:SO}, crossover is possible.\
However, the parameter space obtained from the joint $q=0$ and $q=1$
constraints is nested in the parameter spaces obtained from the
$0<q<1$ constraints.

%-------------------------------------------------------------------------
\subsection{Tests of ordering hypotheses in the conditional tail}
\label{sec:hypothesis}
%-------------------------------------------------------------------------
The ordering constraints developed in
Section~\ref{sec:quantorderconstr} permit the testing of hypothesis of
ordering of two or more conditional survival curves. Suppose that the
survival curves of $Y_{l}|Y_{j}=x$ and $Y_{k}|Y_{i}=x$ are ordered for
all $x$ greater than a large threshold $v$, and let
$\boldsymbol{\theta} =
(\alpha_{l|j},\beta_{l|j},\alpha_{k|i},\beta_{k|i})$. We focus on
testing the composite hypothesis of ordering between the two
conditional tails, i.e.\,
\[
H_0: \boldsymbol{\theta} \in R_{lj,ki} \qquad \text{vs } \qquad
H_1:\boldsymbol{\theta} \in R_{lj,ki}^c,
\]
where $R_{lj,ki}^c$ is the complement of the set $R_{lj,ki}$, i.e.\,
$R_{lj,ki}$ is the constrained space defined in
Section~\ref{sec:inferencestoch}.

Let $\hat{\boldsymbol{\theta}}_0$ and $\hat{\boldsymbol{\theta}}$
be the maximum likelihood estimators subject to
$\boldsymbol{\theta}\in R_{lj,ki}$ and $\boldsymbol{\theta}\in
R_{lj,ki} \cup R_{lj,ki}^c$, respectively. To obtain a test for the
composite hypothesis of ordering in the tail $H_0:
\boldsymbol{\theta}\in R_{lj,ki}$, in the presence of additional
nuissance parameters $(\mu_{l\mid j},\mu_{k \mid i}, \sigma_{l\mid
  j},\sigma_{k \mid i})$, we compare how much larger the profile
log-likelihood is achieved at the maximum $\hat{\boldsymbol{\theta}}$,
than at the null hypothesis $\hat{\boldsymbol{\theta}}_0$.\ Thus, the
generalised likelihood ratio test criterion is used and its null
hypothesis distribution is obtained by simulation, i.e.\, by employing
the first three stages of Algorithm~\ref{alg:1}. Simulation of the
distribution of the likelihood ratio test statistic is necessary for
this situation since the approximate distribution cannot be obtained
by analytic derivation \citep{Cox06}.

Hypotheses and tests of ordering for more than two conditional random
variables are obtained analogously with the notation outlined in
Section~\ref{sec:inferencestoch}.
% [\color{red}Jon, I am not quite sure if you will like Sections 3.2 and
% 3.3. Notation seems complicated and I am not sure if it reads
% well. Any comment would be very helpful.\color{black}]

%-------------------------------------------------------------------------
\section{Simulation study of ordering constraints}
\label{sec:simulation}
%-------------------------------------------------------------------------
\subsection{Design}
\label{sec:design}
The impact of the proposed constraints of Section
\ref{sec:quantorderconstr} and the \cite{keefpaptawn11} constraints is
illustrated with a simulation study.\ We examine the performance of
conditional quantile estimates using simulated datasets from the
limiting representation of three bivariate copula models with Laplace
marginals, namely the logistic, inverted logistic and standard
Gaussian copulas, with dependence parameters $\lambda \in (0,1]$,
$\kappa\in (0,1]$ and $\rho>0$, respectively. The first two models are
also known as the bivariate Gumbel and its corresponding survival
copula. The models can be found in \cite{hefftawn04}, Section
8.\ % Denote by $(Y_1,Y_2)$ a random
% vector arising from one of these copula models.\
We simulate pairs of observations conditionally on $Y_1$ exceeding a
finite threshold $u$ from the \emph{exact} form of the limiting
conditional dependence model.\ Explicitly, we assume that the
conditional distribution function of $Y_2|Y_1>u$, for finite $u$, is
equal to the actual limiting distribution function that is implied by
expression (\ref{eq:G}) for each model, i.e.\, we assume that
\begin{equation}
Y_2 = \alpha_{2|1}Y_1 + Y_1^{\beta_{2|1}}Z_{2|1}, \quad Z_{2|1}\sim
G_{2|1}, \quad Y_1>u \quad \text{and } \quad Z_{2|1}\text{ independent
  of } Y_1
\label{eq:exact_model}
\end{equation}
with $\alpha_{2|1}$, $\beta_{2|1}$ and $G_{2|1}$ chosen such that
expression~({\ref{eq:G}}) holds.\ Algorithm~\ref{algo2} describes the
simulation procedure used in this study for the bivariate case.\ The
normalising parameters and the residual distribution of the limiting
representation~(\ref{eq:G}) are summarised in
Table~(\ref{tab:exact_models}).\
\begin{algorithm}
\caption{Simulation}
\label{algo2}
\begin{algorithmic}[1] 
  \State Set $I=1$, $\left(\alpha_{2|1},\beta_{2|1}\right) \in
  (-1,1)\times(-\infty,1)$ and $N\in \mathbb{N}$.

  \vspace{5pt} \State Simulate $y_{1,I}$ from a Laplace distribution
  conditional on exceeding a threshold $u$.

  \vspace{5pt}
  
  \State Simulate $z_{2|1,I}$ independently of $y_{1,I}$ from the true
  limiting distribution $G_{2|1}$

  \State Set $y_{2,I}=\alpha_{2|1}y_{1,I}+
  y_{1,I}^{\beta_{2|1}}z_{2|1,I}$.
  
  \vspace{5pt}
  
  \State If $I<N$ set $I=I+1$ and go to step 2; otherwise return
  $\left(\boldsymbol{y}_{1},\boldsymbol{y}_{2}\right)$, where
  $\boldsymbol{y}_{1}=
  \left(y_{1,1},...,y_{1,N}\right)',\boldsymbol{y}_{2}=\left(y_{2,1},...,y_{2,N}\right)'$.
\end{algorithmic}
\end{algorithm}
% \subsection{Theoretical models}
% \label{sec:thmodels}
% The three copula models we consider in the simulation study are the
% bivariate extreme value logistic, the inverted logistic and the
% Gaussian with distribution functions given respectively by
% \begin{IEEEeqnarray*}{rCl}
%   \mbbP\left(Y_1 \leq x, Y_2 \leq y\right)&=&
%   \exp\left(-\left[\left\{-\log F(x) \right\}^{1/\kappa} +
%       \left\{-\log F(y) \right\}^{1/\kappa}\right]^{\kappa}\right),\quad \kappa \in (0,1],\\\\
%   \mbbP\left(Y_1 \leq x, Y_2 \leq y\right)&=&\mbbP\left(T_1>-x,T_2>-y\right),\quad \text{where } (T_1,T_2)\sim\text{logistic}(\kappa),\\\\
%   \mbbP\left(Y_1 \leq x, Y_2 \leq y\right)&=& \Phi_{\Sigma}\left[
%     \Phi^{-1}\left\{F(x)\right\},\Phi^{-1}\left\{F(y)\right\}\right],
% \end{IEEEeqnarray*}
% where $F$, $\Phi$ and $\Phi_{\Sigma}$ denote the d.f. of the standard
% Laplace, the standard normal and the standard bivariate Gaussian with
% correlation matrix $\Sigma=(\sigma_{i,j})$ with $\sigma_{i,i}=1$ and
% $\sigma_{i,j}=\rho$, $\rho\in(-1,1)$, for $i\neq j$. The normalising
% functions and the residual distribution $G_{2|1}$ of the limiting
% representation~(\ref{eq:G}) are summarised in
% Table~(\ref{tab:exact_models})

\begin{table}[!htbp]
  \caption{\cite{hefftawn04} normalising constants $\alpha_{2|1}$, $\beta_{2|1}$, and limiting distribution $G_{2|1}$ for the bivariate extreme value logistic, inverted logistic and standard bivariate Gaussian copula with dependence parameters $\lambda \in (0,1]$, $\kappa\in(0,1]$ and $\rho>0$, respectively.}
  \label{tab:exact_models}
  \begin{center}
    \begin{tabular}{c|c|c|c}
      \hline
      Model & $\alpha_{2|1}$ & $\beta_{2|1}$ & $G_{2|1}(z)$\\
      \hline
      logistic             & 1 & 0  & $\left\{1+\exp\left(-z/\lambda\right)
      \right\}^{\lambda-1}$ \\
      inverted logistic    & 0  & $1-\kappa$ & $1-\exp\left(-\kappa z^{1/\kappa}\right)$ \\
      Gaussian             & $\rho^2$  & $1/2$ & N$\{0,2\rho^2(1-\rho^2)\}$ \\
      \hline
    \end{tabular}
  \end{center}
\end{table}
The values of the parameters $\lambda$, $\kappa$ and $\rho$ used in
the simulation study are chosen such that the simulated data preserve
the stochastic ordering feature. For example, in the logistic copula
case, dependence increases as the value of $\lambda$ decreases which
implies larger joint survival probabilities as $\lambda$ decreases. We
thus simulate pairs of observations from the exact conditional
dependence model~(\ref{eq:exact_model}) under $\lambda=0.6$ and
$\lambda=0.9$. For the asymptotically independent models, two pairs of
parameter values are used in the simulation study, i.e.\, for the
inverted logistic copula we use $\kappa=(0.3,0.7)$ and
$\kappa=(1,0.415)$, and for the Gaussian copula we use
$\rho=(0.3,0.7)$ and $\rho=(0,0.5)$.\ The second pair of $\kappa$ and
$\rho$ values is used for comparisons between the two asymptotically
independent models, since the coefficient of tail dependence of
\cite{ledtawn96}, is the same for the inverted logistic and bivariate
Gaussian copula models with $(\kappa,\rho)=(0.415,0.5)$ and $(0,1)$,
respectively. The coefficient of tail dependence is a key summary
measure of extremal tail dependence between the variables $Y_{1}$ and
$Y_{2}$.
  
The conditional quantile estimates are obtained from the original
\cite{hefftawn04} model, the constrained model of \cite{keefpaptawn11}
and the constrained model described in Section
\ref{sec:quantorderconstr}.\ We refer to these models as
Heffernan--Tawn (HT), asymptotic dependence (AD) and stochastic
ordering (SO), respectively.\ The stochastic ordering constraints are
imposed on pairs of observations with different dependence
parameters. The performance of the estimates is assessed with the
Monte Carlo estimate of the root mean square error.\ To be specific
let $y(q)$ and $\hat{y}(q)$ be the true conditional quantile and its
model-based estimate.\ The Monte Carlo estimate of the root mean
square error is
\[
\sqrt{\frac{1}{m}\sum_{i=1}^{m}\left\{\hat{y}_{i}(q) -
    y(q)\right\}^{2}},
\]
where $\hat{y}_{1}(q),...,\hat{y}_{m}(q)$ denotes a Monte Carlo sample
of the conditional quantile estimates and each estimate is obtained
from a simulated sample of $N \in \mathbb{N}$ pairs of observations,
$\left\{\left(\boldsymbol{y}_{1},\boldsymbol{y}_{2}\right):y_{1,i}>u,
  i=1,...,N\right\}$.\ The values $N=500$ and $m=1000$ are used. The
root mean square error is estimated for $q=0.2$ and 0.8, and the
conditioning levels $x_{0.95}$ and $x_{0.999}$, where $x_p$ denotes
the $p$-th quantile of the standard Laplace distribution.\ Comparisons
are made on the basis of ratios of RMSEs between estimates from
different models.\ For the constrained models we tabulate the
percentage of the Monte Carlo samples where estimates changed with
respect to the original \cite{hefftawn04} model.
\subsection{Results of simulation}
\label{sec:simresults}
Table \ref{tab:perc} shows the percentage of the estimates that
changed with respect to one of the three reference models (HT, AD,
SO).\ The imposition of constraints to the parameter space of the HT
model alters estimates particularly in the asymptotically independent
models and less in the asymptotically dependent model.\ In particular,
the larger changes occur when variables are highly dependent except
for the logistic model. Regarding the logistic model, the percentage
of changes in the first two rows appear similar for different cases of
the parameter values compared to the other models.\ This feature stems
from the model specification which specifies the same norming
parameters for both cases of $\lambda$.\ Additionally, small changes
occur within the asymptotically independent models especially when the
variables do not possess strong dependence.\ For the second pair of
parameter values for the inverted logistic and Gaussian copulas, the
changes in parameter estimates do not occur at a similar rate when
dependence between variables is present ($\kappa=0.415$, $\rho=0.5$).\
We therefore conclude that the constraints induced by the AD and SO
models are not only related to the level of dependence but to the
dependence structure as well.
\begin{table}[!htbp]
  \caption{Percentage of estimates (\%) that changed with respect 
    to a reference model.\ First row: percentage of AD
    estimates different from the HT estimates, second row: 
    percentage of SO estimates different from the HT estimates,
    third row: percentage of SO estimates different from AD estimates. 
    Columns show the corresponding model from Table~\ref{tab:exact_models}
    used in the simulation.}
  \label{tab:perc}
  \begin{center}
    \begin{tabular}{c|cc|cc|cc|cc|cc}
      \hline
      & \multicolumn{2}{c|}{logistic} & \multicolumn{4}{c|}{inverted logistic} & 
      \multicolumn{4}{c}{Gaussian}  \\       
      \hline
      $\lambda$, $\kappa$ and $\rho$& $ 0.6$ & $0.9$ & $0.3$ & $0.7$ &
      $0.415$ &$1$ &
      $0.7$ & $0.3$ & $0.5$ & $0$ \\ 
      \hline
      AD-HT & 29 & 27 & 63 & 10 & 41 & 0.3 & 36  & 0 & 6 & 1 \\ 
      \hline
      SO-HT & 54 & 56 & 77 & 45 & 68 & 47 & 42   & 10 &30  & 25\\ 
      \hline
      SO-AD & 33 & 46 & 43 & 44 & 47 & 47 &9     & 9  & 24 & 25\\ 
    \end{tabular}
  \end{center}
\end{table}
Table~\ref{tab:ratio_RMSE} shows the ratio of the Monte Carlo root
mean square error, of the conditional quantile estimates obtained from
the three copula models.\ An increase in efficiency under the
imposition of the constrained models AD and SO is observed for nearly
all conditional quantile estimates in the asymptotically independent
models. The highest reduction in RMSE is achieved by the SO model in
the inverted logistic copula, a feature which is also consistent with
the higher percentage of change in estimates as shown in
Table~\ref{tab:perc}. The conclusion for the asymptotically
independent models is that the efficiency of the conditional quantile
estimates is, in decreasing order, SO, AD and HT. Regarding the
asymptotically dependent logistic copula, constrained models appear to
be less efficient than the HT model and the efficiency of the
conditional quantile estimates is, in decreasing order, HT, AD and SO.
\begin{table}[!htbp]
  \caption{Ratio of the Monte Carlo root
    mean square error of the conditional quantile estimates $\hat{y}_{2|1}(q)$
    obtained from the HT, AD and SO models. Results are reported for $q=0.2,0.5$ 
    and the conditioning levels $x_{0.95}$ and $x_{0.999}$. 
    The value $x_p$ here denotes the $p$-th quantile of the standard Laplace distribution.}
  \label{tab:ratio_RMSE}
  \begin{center}
    \scalebox{0.8} {
    \begin{tabular}{c|ccc|cccc|cccc}
      \hline\hline
      & & \multicolumn{2}{c|}{logistic copula} & \multicolumn{4}{c|}{inverted logistic copula} & 
      \multicolumn{4}{c}{Gaussian copula}  \\
      \hline\hline
      $\lambda$, $\kappa$ and $\rho$& & $ 0.6$ & $0.9$ & $0.3$ & $0.7$ &
      $0.415$ &$1$ &
      $0.7$ & $0.3$ & $0.5$ & $0$ \\ 
      \hline
      \multicolumn{12}{c}{$q=0.2$} \\
      \hline
      \multirow{3}{*} {AD/HT} & $x_{0.95}$&1.10& 1.02 & 0.99 & 1.00 &0.98 & 1.00& 1.00 & 1.00 & 0.99&0.99 \\
      % & $x_{0.99}$ & 1.00 & 1.08 & 0.99&0.97 & 0.98 & 0.98 & 1.00 & 1.00 &1.00 &1.00\\
      & $x_{0.999}$ & 1.10 & 1.04&1.00 &0.96 & 0.97 & 0.98 & 0.99 &1.00 & 0.99&0.99\\
      \hline
      \multirow{3}{*} {SO/HT}& $x_{0.95}$ & 1.20 & 0.97& 0.93&  0.99& 0.97 & 0.96 &0.98 & 0.99 & 0.99&1.00\\
      % & $x_{0.99}$ & 0.99  & 1.24 &0.95 &0.97 & 0.94 & 0.96 & 0.97 & 0.99 &0.99 &1.00\\
      & $x_{0.999}$ & 0.96 & 1.12& 0.96 & 0.94& 0.94 & 0.98 & 0.97 & 0.99 & 0.97&0.97\\
      \hline
      \multirow{3}{*} {SO/AD}& $x_{0.95}$ & 1.10& 0.94& 0.93 & 0.99 & 0.99 & 0.96 & 0.98 & 0.99 & 0.99&1.00 \\
      % & $x_{0.99}$ & 0.96&  1.14 & 0.95 & 0.99& 0.95& 0.97 & 0.97 & 0.99 & 0.99 &1.00\\
      & $x_{0.999}$ & 1.01& 1.07& 0.96& 0.98 & 0.96 & 0.99 & 0.98 & 0.99 & 0.97 &0.97\\
      \hline
      % \multicolumn{12}{c}{$q=0.5$} \\
      % \hline
      % \multirow{3}{*} {AD/HT}& $x_{0.95}$ & 0.99 &1.05&1.02 & 1.02 & 0.98 & 1.00& 1.00 & 1.00 &0.99 & 1.00\\
      % % & $x_{0.99}$ &1.09&1.08&1.01 & 0.96 & 0.98 & 0.98 &1.01 & 1.00 & 1.00&1.00\\
      % & $x_{0.999}$ &1.11&1.08&1.00 & 0.89& 0.94 & 0.95 &1.00 & 1.00 & 0.99&0.97\\
      % \hline
      % \multirow{3}{*} {SO/HT}& $x_{0.95}$ &0.78&1.43&0.85 & 1.02 & 0.89   & 0.98 & 0.94 & 0.99 & 0.94&1.01\\
      % % & $x_{0.99}$ &1.09&1.29&0.79 & 0.96  & 0.82 & 0.93 & 0.90 & 0.98&0.93&1.00\\
      % & $x_{0.999}$ &1.15&1.27&0.84 & 0.85 & 0.82 & 0.92 & 0.86 & 0.98&0.90&0.90\\
      % \hline
      % \multirow{3}{*} {SO/AD}& $x_{0.95}$ &0.78&1.35&0.84 & 0.99 & 0.89& 0.97 & 0.94 & 0.99 & 0.94&1.01\\
      % % & $x_{0.99}$ &0.99&1.19&0.78 & 0.99 & 0.83 & 0.94 & 0.89& 0.98 & 0.93 &1.00\\
      % & $x_{0.999}$ &1.03&1.17&0.84 & 0.95& 0.86 & 0.96 & 0.86& 0.98 & 0.91 &0.92\\
      % \hline
      \multicolumn{12}{c}{$q=0.8$} \\
      \hline
      \multirow{3}{*} {AD/HT}& $x_{0.95}$ &1.04&1.05&1.02 & 1.03 & 1.00 & 0.99 & 1.00 & 1.00 & 0.99 &1.00\\
      % & $x_{0.99}$ &1.10&1.08&1.03 &0.98 & 0.99 & 0.99 & 1.01 & 1.00 & 1.00 &0.99\\
      & $x_{0.999}$ &1.14&1.09&0.81 &0.86 & 0.94& 0.93 & 1.00& 1.00& 0.99&0.95\\
      \hline
      \multirow{3}{*} {SO/HT}& $x_{0.95}$ &0.97&1.25&0.84 & 1.04 & 0.84 & 0.99 & 0.91  & 0.99 & 0.92 &1.02\\
      % & $x_{0.99}$ &1.12&1.29&0.73 & 0.99 & 0.77 & 0.95 & 0.87 & 0.98 & 0.90&0.99\\
      & $x_{0.999}$ &1.18&1.31&0.71 & 0.82 & 0.75& 0.88 & 0.81& 0.97& 0.88&0.88\\
      \hline
      \multirow{3}{*}  {SO/AD}& $x_{0.95}$ &0.93&1.19&0.81 & 1.00 & 0.84 & 1.00 & 0.91 & 0.99 & 0.92 &1.01\\
      % & $x_{0.99}$ &1.01&1.19&0.71 & 1.00 & 0.78 & 0.95& 0.86 &
      % 0.98&0.90 &1.00\\
      & $x_{0.999}$ &1.04&1.19&0.74 & 0.94 & 0.79 & 0.94& 0.80 & 0.97& 0.88&0.92\\
      \hline
    \end{tabular}
  }
  \end{center}
\end{table}
\newpage
%-------------------------------------------------------------------------
\section{Application: drug-induced liver injury}
\label{sec:application}
%-------------------------------------------------------------------------

%---------------------------------------------------------------------
\subsection{Preprocessing and outline of analysis}
\label{sec:Introapp}
%---------------------------------------------------------------------
The data that we consider in this study relates to a sample of 606
patients that were issued a drug linked to liver injury in a
randomised, parallel group, double blind Phase 3 clinical study.\ ALT
and TBL measurements were collected from all patients at baseline
(prior to treatment) and post-baseline (after 6 weeks of treatment)
periods.\ Let $V_{i,B}^{j}$ and $V_{i,P}^{j}$ be the $i$-th baseline
and post-baseline laboratory variable respectively, measured at dose
$j=A,B,C$ and $D$.\ We use $i=1,2$ to denote the ALT and TBL,
respectively. 

Instead of working with the raw data, the \cite{boxcox64}
transformation is applied initially to stabilise the heterogeneity
observed in the samples.\ For this dataset we apply the
log-transformation and we denote the transformed data by
$W_{i,B}^{j}=\log(V_{i,B}^{j})$ and $W_{i,P}^{j}=\log(V_{i,P}^{j})$.\
Consequently, we use a robust linear regression model of the
log-post-baseline on the log-baseline variable to adjust for the
baseline effect, i.e.\, in its simplest form, the robust linear
regression of $W_{i,P}^{j}$ on $W_{i,B}^{j}$ is
\begin{equation}
W_{i,P}^{j}= \gamma_{i}^{j} + \delta_{i}^{j} W_{i,B}^{j} +
X_{i}^{j},\quad i=1,2, \quad j=A,\hdots,D,
\label{eq:qreg}
\end{equation}
where $(\gamma_{i}^{j},\delta_{i}^{j}) \in \mbbR^2$ and $X_{i}^{j}$ is
a zero mean error random variable.\ Here we use median quantile
regression \citep{regquant} which is equivalent to assuming that the
error random variable $X_{i}^{j}$ follows the Laplace distribution
with zero location constant scale parameters \citep{yumoy01}.\ The
parameter estimates $(\hat{\gamma}_{1}^{j})$,
$(\hat{\gamma}_{2}^{j})$, $(\hat{\delta}_{1}^{j})$ and
$(\hat{\delta}_{2}^{j})$, $j=A,B,C,D$, were found to be all
significantly different from 0 and equal to $(0.48,0.53,0.46,0.99),
(0.40,0.58,0.40,0.58)$, $(0.86,0.84,0.91,0.74)$ and
$(0.84,0.77,0.86,0.81)$, all indicating positive association of
post-baseline with baseline.

Our approach is based on the basic model structure of
\cite{southheff12b}, i.e.\, the extremal dependence of
$(X_{1}^j,X_{2}^j)$ is estimated from the \cite{hefftawn04}
conditional dependence model whereas the log-baseline variables
$W_{1,B}^{j}$ and $W_{2,B}^{j}$ are modelled independently for each
dose.\ Under the assumption of independence between $X_{i}^{j}$ and
$W_{i,B}^{j}$ simulated samples of the post-baseline variables can be
generated.\ In this example, the maximum Spearman's correlation
observed was 0.10 and corresponds to the pair $W_{2,B}^B$ and $X_2^B$,
whereas all other combinations gave values lower than 0.07. The exact
procedure of the simulation is straightforward, i.e.\, residual and
baseline samples are generated from their models and are combined in
equation~(\ref{eq:qreg}), with $\gamma_{i}^{j}$ and $\delta_{i}^{j}$
replaced by their corresponding maximum likelihood estimates, to
produce simulated samples for the log-post-baseline variable
$W_{i,P}^{j}$.\ The simulated sample is then back-transformed to its
original scale using the inverse Box-Cox transformation.

The key differences between our modelling procedure and
\cite{southheff12b} are related to the modelling of the baseline and
the estimation of the conditional dependence model parameters.\
Firstly, for each baseline variable we implement the univariate
semi-parametric model of \cite{coletawn94} as described in
Section~\ref{sec:Laptrans} by equation~(\ref{eq:coletawn}) whereas
\cite{southheff12b} use the empirical distribution function.\ Our
motivation for modelling the tail of the baseline variable stems from
the fact that it is likely to observe higher baseline ALT and TBL in
the population (post-marketing period) than in the clinical trial
(pre-marketing period).\ Therefore, tail modelling of the baseline is
key to the simulation process as it incorporates a natural source of
extremity through model-based extrapolation.\ Results from the
univariate analysis are not presented in this paper but similar
analyses can be found in \cite{southheff12} and \cite{paptawn12}. 

In Section~\ref{sec:appHT} we test and subsequently select the
stochastic ordering model developed in Section
\ref{sec:stochorder}. The effect of the ordering constraints is
illustrated via estimates of conditional quantiles for all doses and
results are compared with the unconstrained estimates obtained from
the HT model.\ We proceed to the prediction of the probability of
extreme joint elevations by simulating post-baseline laboratory data
of hypothetical populations of size $200000$ using the fitted marginal
and conditional dependence models. The assessment of the uncertainty
of the estimates of extreme quantities of interest is performed via
the bootstrap procedure.

\subsection{Hypothesis testing and selection of dependence model}
\label{sec:appHT}
Let $Y_{1}^{j}$ and $Y_{2}^{j}$ be the transformed, with respect to
equation (\ref{eq:laplace}), residuals $X_{1}^j$ and $X_2^j$ for each
dose $j$.\ Figure~\ref{fig:data} shows the bivariate scatterplots of
$Y_{2}^{j}$ against $Y_{1}^{j}$ for all dose levels. The tail
dependence between the residual ALT and TBL variables appears to be
very weak for all dose levels and a direct conclusion regarding the
stochastic ordering effect cannot be made on the basis of
Figure~\ref{fig:data}.\ This is also justified by the estimated
${\chi}$ and $\bar{\chi}$ measures of tail dependence
\citep{coleheff99} which are 0 for all doses.  

To assess the ordering assumption, we use the likelihood ratio
criterion described in Section~\ref{sec:hypothesis}, and test at the
significance level of 5\%, the hypotheses of ordered dose dependence
in the conditional distributions of ALT and TBL given that TBL and ALT
exceed a large threshold $v$, respectively. For the SO model we
selected a range of values $v$ above 5, the $99.7\%$ quantile of the
Laplace distribution.\ Similar results where obtained from all
thresholds and here we report the output for $v=5$.
Figure~\ref{fig:lrt} shows the simulated distribution of the
likelihood ratio test statistic under the null hypothesis of ordered
dependence. Both histograms imply that we cannot reject the null
hypothesis at 5\% with stronger evidence for the distribution of TBL
given large ALT.\ The p-values are approximately 0.43 and 0.15,
respectively. The effect of constraining the parameter space to impose
the stochastic ordering assumption between all dose levels is shown in
Figure \ref{fig:condquantiles} via the conditional quantile estimates
obtained from the SO model.\ A weak lack of ordering appears from the
estimated conditional quantiles of TBL given ALT from the HT model as
shown in Figure~\ref{fig:condquantiles} in the standard Laplace
scale.\ The estimates of the median conditional quantiles from the HT
model are ordered above approximately the conditioning level $4$
whereas the minimum and maximum conditional quantile estimates exhibit
a lack of ordering for the majority of the conditioning levels. The
imposition of the ordering constraints induces changes in all
conditional quantile estimates which satisfy the ordering assumption
above the conditioning level $5$. The most important change in the
quantile estimates is observed for dose $A$ which are considerably
smaller than the HT estimates, when $q=1$.

The focus is placed now on the prediction of joint elevations of ALT
and TBL. As stated by \cite{FDAliver} and mentioned earlier in
Section~\ref{sec:intro}, DILI is associated with ALT and TBL exceeding
the 3$\times$ULN$_{\text{ALT}}$ and 2$\times$ULN$_{\text{TBL}}$
respectively. For ALT, the ULN is taken to be 36 units/litre and for
TBL is 21 $\mu$mol/litre. Let $p^{j}(x,y)$ be the joint survival
probability of $\{\text{ALT}>x \cap \text{TBL}>y\}$ at dose level
$j=A,B,C$ or $D$, i.e.\,
\begin{equation}
p^{j}(x,y):=\mbbP(V_{1,P}^{j} > x, V_{2,P}^{j} > y), \quad x,y\in\mbbR.
\label{eq:survprobs}
\end{equation}
To estimate the survival probability~(\ref{eq:survprobs}) we follow
the approach of \cite{southheff12b}, also mentioned earlier in
Section~\ref{sec:Introapp} and simulate $N=200000$ post-baseline
samples. For each dose level, $N$ baseline samples are generated from
the semi-parametric model~(\ref{eq:coletawn}) and are subsequently
combined with $N$ generated residual samples from the SO constrained
\cite{hefftawn04} model in equation~(\ref{eq:qreg}), with
$\gamma_{i}^{j}$ and $\delta_{i}^{j}$ replaced by their corresponding
maximum likelihood estimates, to produce simulated samples for the
log-post-baseline variable $W_{i,P}^{j}$.\ The simulated sample is
then back-transformed to its original scale and the survival
probability~(\ref{eq:survprobs}) is estimated empirically. To assess
the uncertainty of the estimates, this procedure is repeated $1000$
times and $95\%$ equal-tail confidence intervals are obtained from the
bootstrap distribution of each estimate.

\begin{figure}[htpb!]
\centering 
\includegraphics[scale=1,trim= 20 15 20 20]{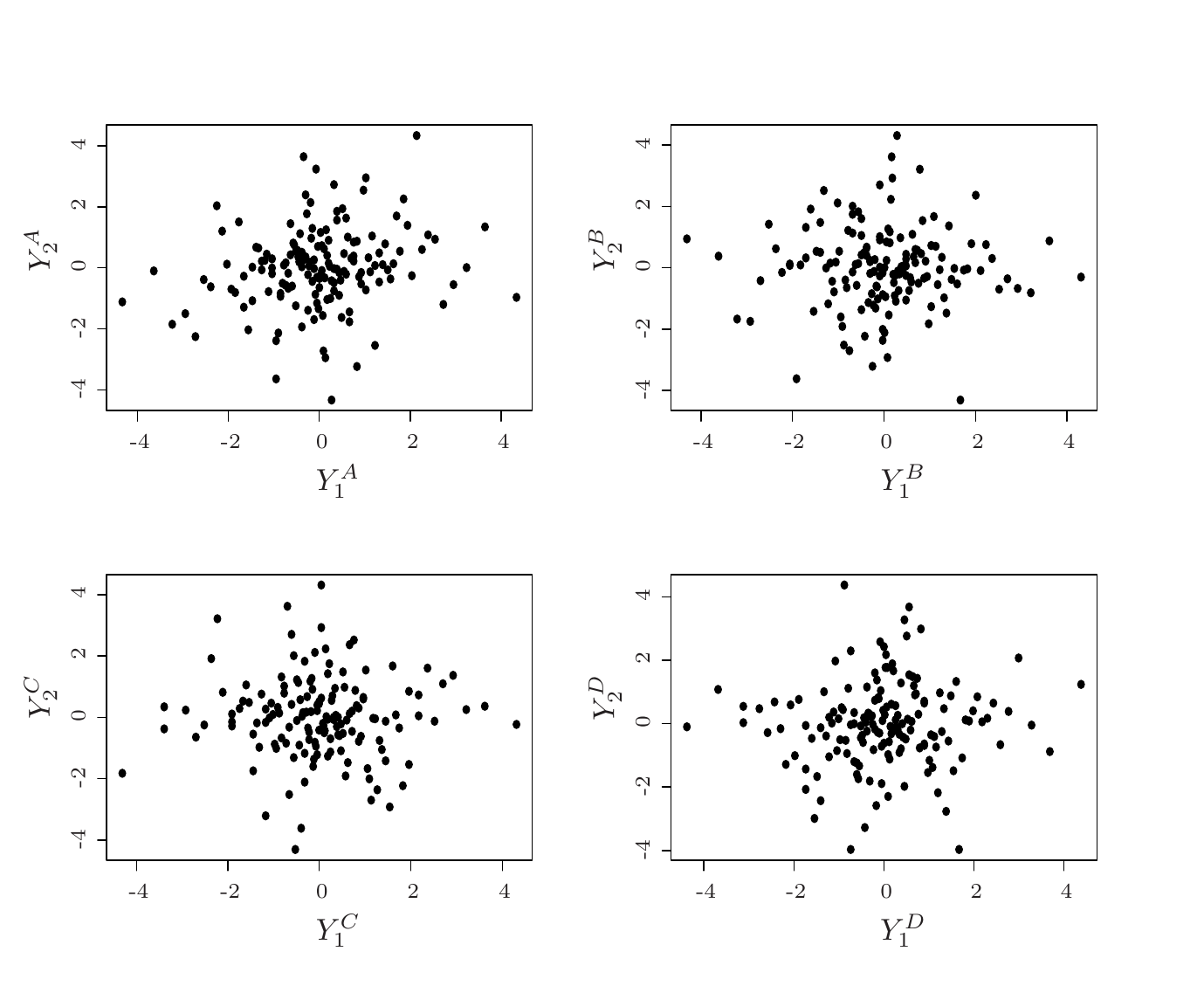}
% \pstool[width=4.5in,height=5.5in,angle=270]{resdata.eps}{
%     \psfrag{a}{$Y_{1}^{A}$} \psfrag{b}{$Y_{2}^{A}$}
%     \psfrag{c}{$Y_{1}^{B}$} \psfrag{d}{$Y_{2}^{B}$}
%     \psfrag{e}{$Y_{1}^{C}$} \psfrag{f}{$Y_{2}^{C}$}
%     \psfrag{g}{$Y_{1}^{D}$} \psfrag{h}{$Y_{2}^{D}$}
%     \psfrag{-4}{\scriptsize -$4$} \psfrag{-2}{\scriptsize -$2$}
%     \psfrag{0}{\scriptsize$0$} \psfrag{2}{\scriptsize$2$}
%     \psfrag{4}{\scriptsize$4$} }  
\caption{Scatterplots of $Y_{2}^{j}$ against $Y_{1}^{j}$, for
  $j=A,...,D$}
\label{fig:data}
\end{figure}
\newpage
Figure~\ref{fig:predictions} shows the estimated survival
probabilities $p^{j}(x,y)$ for $x=3\times\text{ULN}_{\text{ALT}}$ and
variable $y$. For comparisons, estimates are reported from the SO and
HT models. The imposition of the constraints induces changes in all
estimates. In particular, the survival probability estimates from the
SO model are lower than the HT model for all doses, especially in the
region $\{y:y<30\}$. This behaviour also implies changes in the upper
tail and in the joint region of DILI, i.e.\, when $x=3\times
\text{ULN}_{\text{ALT}}$ and $x=2\times \text{ULN}_{\text{TBL}}$.

\begin{figure}[htpb!]
\centering 
\includegraphics[scale=.7,trim= 20 15 20 20]{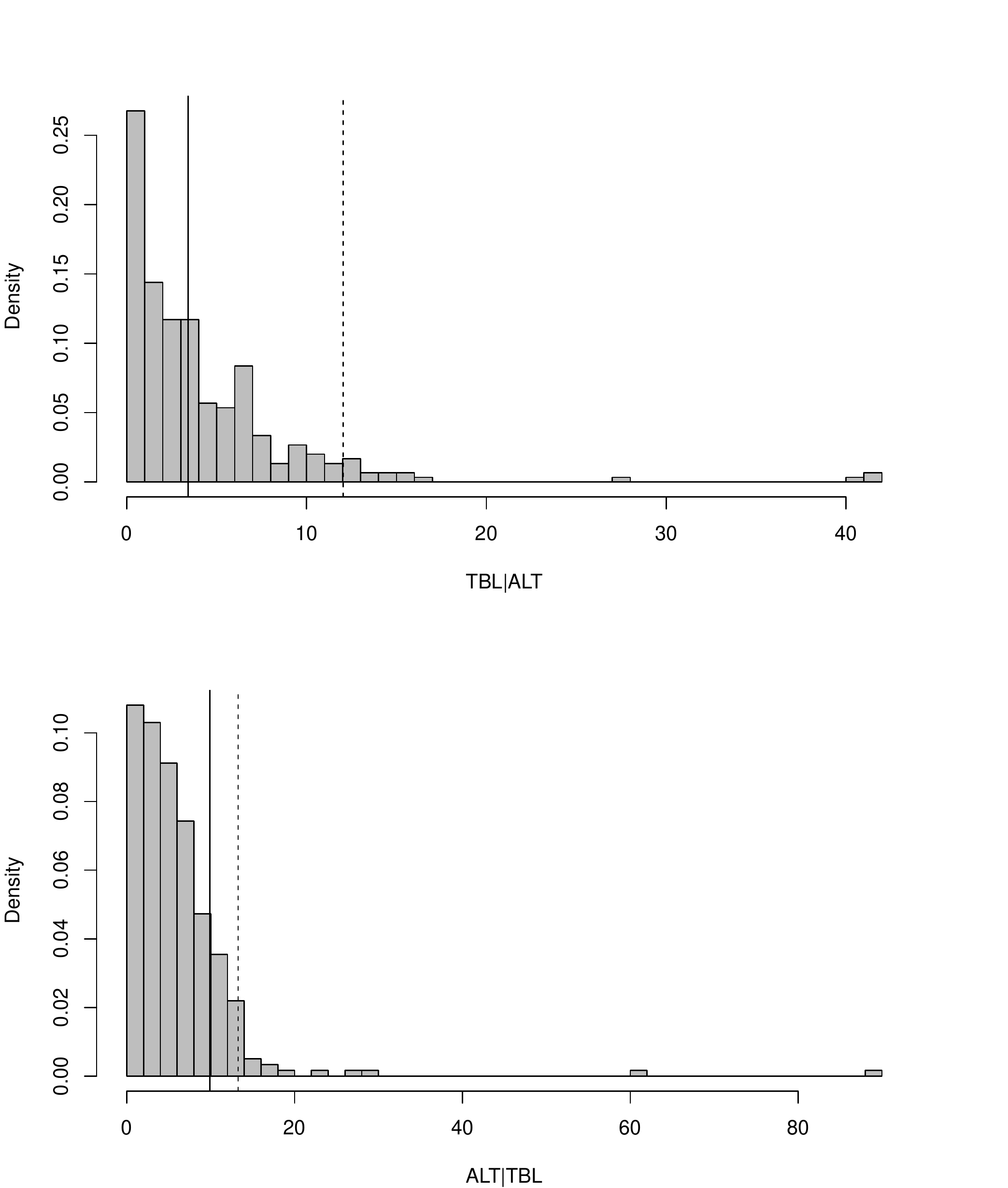}
\caption{Simulated distribution of the likelihood ratio test statistic
  described in Section~\ref{sec:inferencestoch} for testing the
  hypothesis of ordering between the four doses for the conditional
  distributions of $Y_2 \mid Y_1 > v$ (top) and $Y_1 \mid Y_2 > v$
  (bottom), where $v$ is the 99.7\% quantile of the standard Laplace
  distribution. The solid and dashed vertical lines correspond to the
  observed statistic and the 95\% quantile of the simulated
  distribution, respectively.}
\label{fig:lrt}
\end{figure}
\clearpage
%%---------------------------------------------------------------------
%\subsection{Conditional quantile estimates}
%\label{sec:Condquant}
%%---------------------------------------------------------------------
\begin{figure}[h!]
  \centering
  \includegraphics[scale=.8,trim= 30 40 30 30]{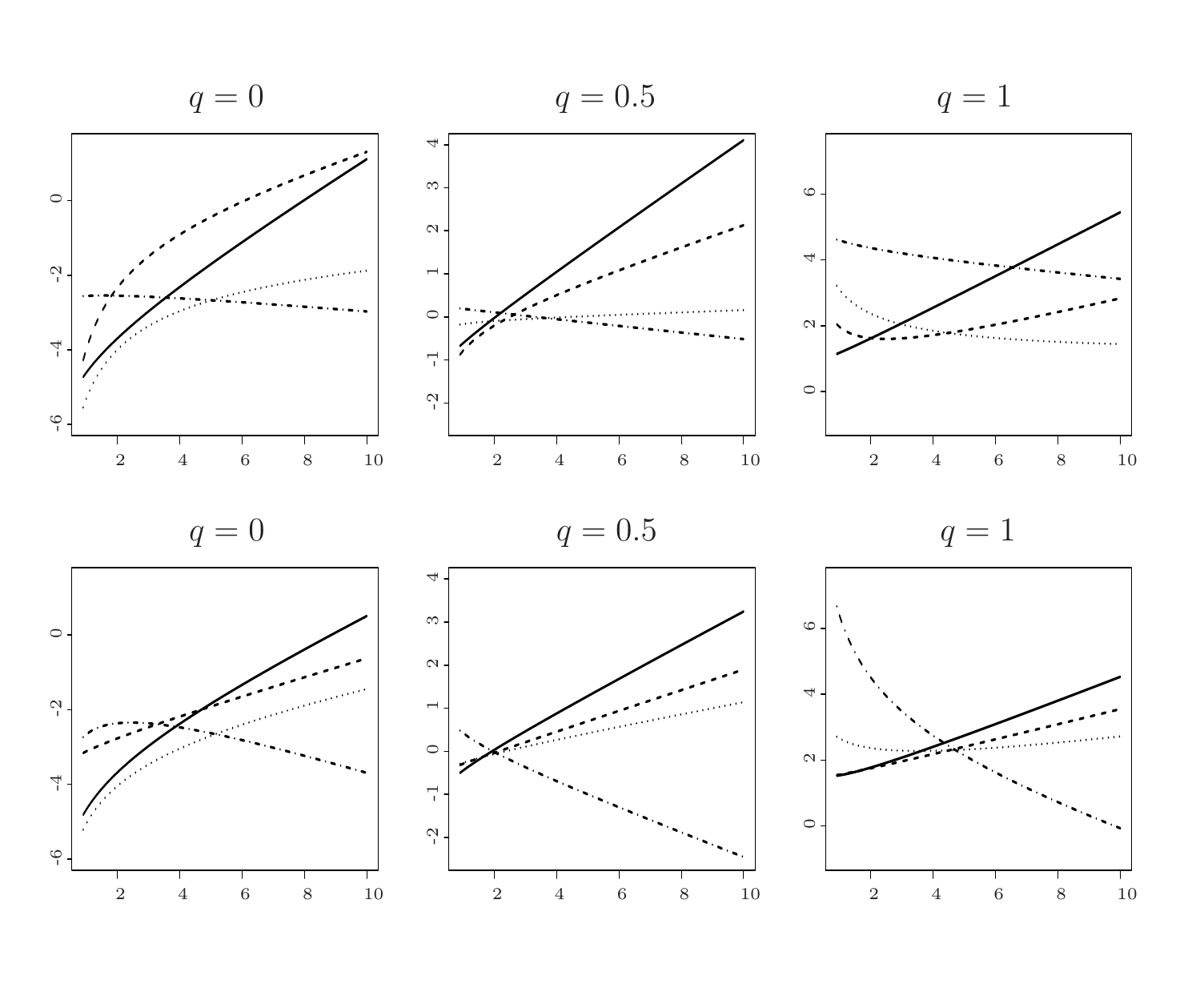}
  % \pstool[width=6in,height=5in]{condquantiles2.eps} {
  %  \psfrag{x}{} \psfrag{QQQQQQQ}{}
  %  \psfrag{-6}{\tiny-$6$} \psfrag{-5}{\tiny-$5$}
  %  \psfrag{-4}{\tiny-$4$} \psfrag{-3}{\tiny-$3$}
  %  \psfrag{-2}{\tiny-$2$} \psfrag{-1}{\tiny-$1$}
  %  \psfrag{12}{\tiny$12$} \psfrag{14}{\tiny$14$}
  %  \psfrag{0}{\tiny$0$} \psfrag{1}{\tiny$1$} \psfrag{6}{\tiny$6$}
  %  \psfrag{5}{\tiny$5$} \psfrag{4}{\tiny$4$} \psfrag{3}{\tiny$3$}
  %  \psfrag{7}{\tiny$7$} \psfrag{8}{\tiny$8$} \psfrag{2}{\tiny$2$}
  %  \psfrag{0.00}{\tiny$0.0$} \psfrag{0.05}{\tiny$0.05$}
  %  \psfrag{0.10}{\tiny$0.1$} \psfrag{0.15}{\tiny$0.15$}
  %  \psfrag{0.1}{\tiny$0.1$} \psfrag{0.5}{\tiny$0.5$}
  %  \psfrag{0.2}{\tiny$0.2$} \psfrag{1.0}{\tiny$1.0$}
  %  \psfrag{0.3}{\tiny$0.3$} \psfrag{1.5}{\tiny$1.5$}
  %  \psfrag{0.4}{\tiny$0.4$} \psfrag{0.20}{\tiny$0.2$}
  %  \psfrag{0.25}{\tiny$0.25$} \psfrag{0.30}{\tiny$0.3$}
  %  \psfrag{0.000}{\tiny$0.0$} \psfrag{0.005}{\tiny$0.005$}
  %  \psfrag{0.010}{\tiny$0.01$} \psfrag{0.015}{\tiny$0.015$}
  %  \psfrag{10}{\tiny$10$} \psfrag{15}{\tiny$15$}
  %  \psfrag{20}{\tiny$20$} \psfrag{0.0}{\scriptsize$0.0$}
  %  \psfrag{AAAAAA}{\hspace{-6pt}$q=0$} \psfrag{BBBBBB}{\hspace{-10pt}$q=0.5$}
  %  \psfrag{CCCCCC}{\hspace{-8pt}$q=1$} \psfrag{DDDDDD}{\hspace{-6pt}$q=0$}
  %  \psfrag{EEEEEE}{\hspace{-10pt}$q=0.5$} \psfrag{FFFFFF}{\hspace{-8pt}$q=1$} }
   \caption{Conditional quantile estimates $y_{2|1}(q)$ of
     $Y_{2}^{j}|Y_{1}^{j}=x$ under HT (first row) and SO models
     (second row). The line types correspond to {- $\cdot$ - $\cdot$
       -} dose $A$,\
     {\color{black}\leavevmode\hdashrule[0.1cm]{1.2cm}{0.022cm}{0.3mm
         1mm}} dose $B$,\
     {\color{black}\leavevmode\hdashrule[0.1cm]{1.2cm}{0.02cm}{1mm
         0.5mm}} dose $C$,\
     {\color{black}\leavevmode\hdashrule[0.1cm]{1.2cm}{0.02cm}{3mm
         0mm}} dose $D$. First, second and third column show the
     minimum ($q=0$), median ($q=0.5$) and maximum ($q=1$) conditional
     quantiles, respectively.}
   \label{fig:condquantiles}
 \end{figure}

% \subsection{Prediction}
% \label{sec:prediction} In this section, the main
\begin{figure}[!htbp]
  \centering
  \includegraphics[scale=1.,trim= 10 10 10 10]{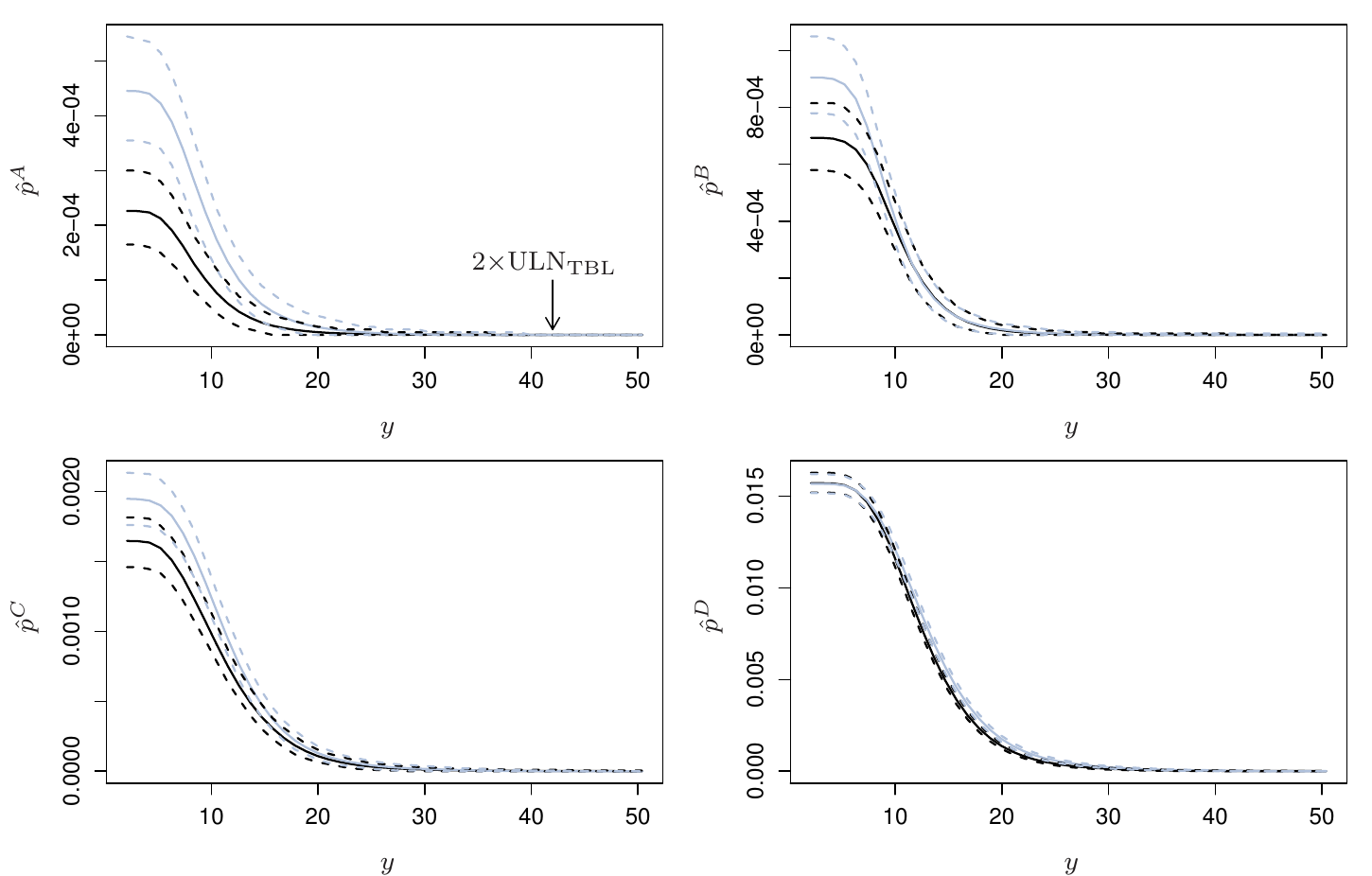}
  % \pstool[scale=0.58]{predictions.eps} 
  % {
  %   \psfrag{2xULN}{\scriptsize \hspace{-16pt}2$\times \text{ULN}_{\text{TBL}}$}
  %   \psfrag{PP}{\scriptsize $\hat{p}^{A}$}
  %   \psfrag{PB}{\scriptsize $\hat{p}^{B}$}
  %   \psfrag{PC}{\scriptsize $\hat{p}^{C}$}
  %   \psfrag{PD}{\scriptsize $\hat{p}^{D}$}
  %   \psfrag{y}{\scriptsize $y$}
  % }
  \caption{Estimated survival probabilities $\hat{p}^{j}(x,y)$ for all
    doses with $x=3\times\text{ULN}_{\text{ALT}}$ and variable
    $y$. Solid and dashed lines correspond the point-wise estimates
    and their $95\%$ confidence intervals, respectively. Black and
    grey colour shows estimates from the SO model and HT models,
    respectively.}
  \label{fig:predictions}
\end{figure}

\newpage
\section{Discussion}
As identified by \cite{southheff12,southheff12b}, liver toxicity can
be assessed by the joint extremes of ALT and TBL. However, due to the
limited sample size and the insufficient duration of the clinical
trial (6 weeks only), extrapolation to the tail area that identifies
DILI is not feasible for the laboratory data that have been analysed
in this paper.\ \cite{southheff12b} found some dose response
relationship for the probability of joint extreme elevations but
attributed this pattern to the large number of cases with
$\text{ALT}>3\times \text{ULN}_{\text{ALT}}$ in the higher dose groups
rather than an effect on TBL or stronger extremal dependence. Here, we
have developed methodology for ordered tail dependence across doses, a
pattern that is potentially triggered by toxicity but not formally
assessed by \cite{southheff12b}. Based on current biological
understanding, we view this pattern as an alternative measure of
altered liver behaviour and our aim in this analysis is to formally
test ordered dependence in the joint tail area of baseline-adjusted
ALT and TBL.

Our model formulation builds on \cite{southheff12b} model and extends
the \cite{hefftawn04} conditional approach, to account for stochastic
ordering in the tails for assessing DILI in multiple dose trials. Our
approach consists of bounding conditional distribution functions
through additional constraints on the parameter space of
\cite{hefftawn04} model. These constraints are used to construct
likelihood ratio tests which allow model selection and potential
efficiency gains in estimation as shown mainly by our simulations for
asymptotically independent models.

% We have extended the conditional extremes model of \cite{hefftawn04}
% and \cite{keefpaptawn11} to account for ordered tail dependence and
% used it to analyse liver-related laboratory safety data that have been
% linked with toxicity \citep{southheff12b}.\ 

% Our motivation for examining this measure stems from the fact that
% liver cells leak ALT into the blood as they are damaged. As the
% amount of damage increases, the amount of ALT increases, and so the
% liver begins to lose its capacity to clear TBL. Subsequently, TBL is
% also expected to start increase. At levels of damage that do not
% affect liver's ability to clear TBL, dependence is not
% expected. Hence, given that the drug has increasing toxicity with
% dose, we expect a natural ordering in the joint tail area of ALT and
% TBL.

Our main finding that complements \cite{southheff12b} analysis is
statistical evidence of ordered tail dependence across doses which we
view as a signal of altered liver behaviour.\ Our results and
conclusions predict slightly higher probabilities of extreme
elevations than those predicted originally by \cite{southheff12b} but
of the same order of magnitude. This is possibly a consequence of the
modelling of baseline variables which allows extrapolation in the
marginal tails but could also be attributed to the different robust
regression approach used here to adjust the baseline effect. Also, the
predicted survival curves indicate ordering from both unconstrained
and constrained modelling approaches. This feature stems primarily
from the conditional dependence model estimates of baseline adjusted
ALT and TBL which show ordering for a range of
quantiles.% than those for doses that have not been linked with
% toxicity and similar findings with for the toxic dose.

Last, there are some caveats with the proposed ordering effect used as
a measure of altered liver behaviour, especially when considering
highly toxic drugs for prolonged periods. If much damage has been done
so that there is no ALT left to leak into the blood, we would expect
ALT to come back down but TBL to remain high. The proposed methodology
though could still be used to monitor such patterns in longitudinal
trials via tests of dose ordering at consecutive time points.

\subsection*{Acknowledgments}
Ioannis Papastathopoulos acknowledges funding from AstraZeneca and the
SuSTaIn program -~EPSRC~grant EP/D063485/1 - at the School of
Mathematics, University of Bristol.\ We would particularly like to
thank Harry Southworth of AstraZeneca, two Referees and the Associate
Editor for helpful discussions and constructive comments on the
analysis of the pharmaceutical data.

% ------------------------------------------------------------------------
% References and BiBTeX style.
% ------------------------------------------------------------------------

\bibliographystyle{agsm} %chicago.sty

% \bibliography{/home/ip12483/Brunel_Fellowship/Bibliography/BIBTeX/bibliography.bib}

% ------------------------------------------------------------------------
% Values of parameters where D has 2 t.p.  x <- seq(.01,10,len=100000)
% zj <- 0.6; zk <- 0.6 c <- 0.1; b <- 0.2 ; d <- 0.5
% ------------------------------------------------------------------------

\end{document}